\newcommand{\first}{{\rm first}}
\newcommand{\last}{{\rm last}}
\newcommand{\bLSP}{{\rm bLSP}}
\newcommand{\set}{{\rm set}}
\newcommand{\pref}{{\rm pref}}
\newcommand{\alp}{{\rm alph}}
\newcommand{\SbLSP}{S_{\rm bLSP}}
\newcommand{\bw}{\ensuremath{\mathbf{w}}\xspace}
\newcommand{\bF}{\ensuremath{\mathbf{F}}\xspace}
\newcommand{\mbf}{\ensuremath{\mathbf{f}}\xspace}
\begin{document}
\title{A Characterization of Infinite LSP Words\footnote{This paper, without the proof 
of Proposition~\ref{P:carac1_binary}, the proof of Lemma~\ref{L:finite} and Lemma~\ref{L:extendable}, has been accepted for publication in the proceedings of conference \href{http://www.cant.ulg.ac.be/dlt}{Developments in Language Theory} (DLT 2017). The final publication will be available at \url{link.springer.com}. Many thanks to referees for their careful readings and their interesting suggestions and questions.}}
\toctitle{A Characterization of Infinite LSP Words}
\author{Gwena\"el Richomme}
\tocauthor{Gwena\"el~Richomme}
\institute{
Univ. Paul-Val\'{e}ry Montpellier 3, UFR 6, Dpt MIAp, Case J11,\\
Rte de Mende, 34199 Montpellier Cedex 5, France\\ and \\
LIRMM (CNRS, Univ. Montpellier), UMR 5506 - CC 477,\\ 
161 rue Ada,  34095 Montpellier Cedex 5, France\\
\email{gwenael.richomme@lirmm.fr}
}
\authorrunning{G. Richomme}

\maketitle

\begin{abstract}
G. Fici proved that a finite word has a minimal suffix automaton 
if and only if 
all its left special factors occur as prefixes. 
He called LSP all finite and infinite words having this latter property. 
We characterize here infinite LSP words in terms of $S$-adicity. More precisely we provide a finite set of morphisms $S$ and an automaton ${\cal A}$ such that an infinite word is LSP if and only if it is $S$-adic and all its directive words are recognizable by ${\cal A}$.
\keywords{generalizations of Sturmian words, morphisms, $S$-adicity}
\end{abstract}

\section{Introduction}

Extending an initial work by M.~Sciortino and L.Q.~Zamboni \cite{Sciortino_Zamboni2007DLT},
G.~Fici investigated relations between the structure of the suffix automaton built from a finite word $w$ and the combinatorics of this word \cite{Fici2011TCS}. 
He proved that words having their associated automaton with a minimal number of states (with respect to the length of $w$) are the words having all their left special factors as prefixes. 
G.~Fici asked in the conclusion of his paper for a characterization of the set of words having the previous property, that he called the LSP property, both in the finite and the infinite case. 
We provide such a characterization for infinite words in the context of $S$-adicity.

We assume that readers are familiar with combinatorics on words; for omitted definitions (as for instance, factor, prefix, ...) see, \textit{e.g.}, \cite{Lothaire1983book,Lothaire2002,Berthe_Rigo2010CANT}. 
Given an alphabet $A$, $A^*$ is the set of all finite words over $A$ and $A^\omega$ is the set of all infinite words over $A$. A finite word $u$ is a \textit{left special factor} of a finite or infinite word $w$ if there exist at least two distinct letters $a$ and $b$ such that both $au$ and $bu$ occur in $w$. Given two alphabets $A$ and $B$, a \textit{morphism} (\textit{endomorphism} when $A = B$) $f$ is a map from $A^*$ to $B^*$ such that for all words $u$ and $v$ over $A$, $f(uv) = f(u)f(v)$. Morphisms extend naturally to infinite words.

Let $S$ be a set of morphisms. An infinite word $\bw$ is said $S$-adic if there exists a sequence $(f_n)_{n\geq 1}$ of morphisms in $S$
and a sequence of letters $(a_n)_{n\geq 1}$ such that 
$\lim_{n \to +\infty} |f_1f_2\cdots f_n(a_{n+1})|$ $= +\infty$ and
$\bw = \lim_{n \to +\infty} f_1f_2\cdots f_n(a_{n+1}^\omega)$.
The sequence $(f_n)_{n\geq 1}$ is called the \textit{directive word} of $\bw$. 
We consider here $S$-adicity in a rather larger way:
a word $\bw$ is \textit{$S$-adic} with directive word $(f_n)_{n\geq 1}$ if there exists an infinite sequence of infinite words $(\bw_n)_{n \geq 1}$ such that $\bw_1 = \bw$ and $\bw_n = f_n( \bw_{n+1})$ for all $n \geq 1$. Denoting $w_k = f_kf_{k+1}\cdots f_n(a_{n+1}^\omega)$ shows that if the former definition is verified, the latter is also verified. 
This second definition may include degenerated cases as, for instance, the word $a^\omega$ that is $\{Id\}$-adic with $Id$ the morphism mapping $a$ on $a$. For more information on $S$-adic systems, readers can consult, \textit{e.g.}, papers \cite{Berthe2016RIMS,Berthe_Delecroix2014RIMS} and their references.

Let $p_\bw$ be the \textit{factor complexity} of the infinite word $\bw$, that is the function that counts the number of different factors of $\bw$. If $\bw$ is an infinite LSP word, by definition, it has at most one left special factor of each length. Thus it is well-known that $p_\bw(n+1)-p_\bw(n) \leq \#A-1$ (where for any set $X$, $\#X$ denotes the cardinality of $X$).  We let readers verify that all infinite LSP words are uniformly recurrent (all factors occur infinitely many times with bounded gaps). 
By a result of S. Ferenczi \cite{Ferenczi1996ETDS} (see also \cite{Leroy2012thesis,Leroy2014DMTCS,Leroy_Richomme2012integers}), 
there exists a finite set $S$ of morphisms such that all infinite LSP words are $S$-adic. But this general result does not provide a characterization of infinite LSP words.

Our characterization is twofold. First we exhibit an adapted finite set of morphisms $\SbLSP$. Second we show that there exists an automaton that recognizes the set of directive words of infinite LSP words.
In the binary case, our result can be seen as a version for infinite words
of a result of M.~Sciortino and L.Q.~Zamboni \cite{Sciortino_Zamboni2007DLT} (see the conclusion).
In the ternary case, morphisms in $\SbLSP$ are the mirror morphisms of Arnoux-Rauzy-Poincar\'e morphisms (here $f$ is a \textit{mirror morphism} of $g$ if $f(a)$ is the mirror image or reversal of $g(a)$ for all letters $a$).
These morphisms were used by V.~Berth\'e and S.~Labb\'e \cite{Berthe_Labbe2015AAM} to provide an S-adic system recognizing sequences arising from the study of the Arnoux-Rauzy-Poincar\'e multidimensional continued fraction algorithm. 
For alphabets of cardinality at least $4$, new morphisms appear.

The paper is organized as follows.
After introducing in Section~\ref{sec:morphisms} our basis of morphisms $\SbLSP$,
in Section~\ref{sec:s-adicity}, we show that all infinite LSP words are $\SbLSP$.
Section~\ref{sec:fragility} introduces a property of infinite LSP words and a property of morphisms in $\SbLSP$ that allow to explain why the LSP property is lost when applying a LSP morphism to an infinite LSP word.
Section~\ref{sec:origin} allows to trace the origin of the previous property of infinite LSP words.
Based on this information, Section~\ref{sec:automaton} defines our automaton and
Section~\ref{sec:carac} proves our characterization of infinite LSP words.
We end with a few words on characterizations of finite LSP words.

\section{\label{sec:morphisms}Some Basic Morphisms}

We call \textit{basic LSP morphism} on an alphabet $A$, or $\textit{bLSP}$ in short, 
any endomorphism $f$ of $A^*$ that verifies:
\begin{itemize}
\itemsep0cm
\item there exists a letter $\alpha$ such that $f(\alpha) = \alpha$, and
\item for all letters $\beta \neq \alpha$, there exists a letter $\gamma$ such that $f(\beta) = f(\gamma) \beta$
\end{itemize}
We let $\SbLSP(A)$ (or shortly $\SbLSP$ when $A$ is clear) denote the set of all bLSP morphisms over the alphabet $A$. Observe that for any bLSP morphism $f$, there exists a unique letter $\alpha$ such that $f(\alpha) = \alpha$. We let $\first(f)$ denote this letter as it is also the first letter of $f(\beta)$ for any letter $\beta$. 
We also let $[ u_1, u_2, \ldots]$ denote the morphism defined by $a \mapsto u_1$, $b \mapsto u_2$, \ldots. 
For instance, $[a, ab, abc, abcd, abcde]$  defines the morphism $f$ such that $f(a) = a$, $f(b) = ab$, $f(c) = abc$, $f(d) = abcd$, $f(e) = abcde$.

\begin{remark}\rm
By definition of bLSP morphisms, given an alphabet $A$, there is a bijection between $\SbLSP(A)$ 
and the set of labeled rooted trees with label in $A$ (all labels are on vertices and distinct vertices have distinct labels). Given a labeled rooted tree $T = (A, E)$, the associated bLSP morphism $f$ is the one such that, for all letters $\beta$, $f(\beta)$ is the word obtained concatenating vertices on the path in $T$ from the root to $\beta$.
For instance, the rooted trees associated with morphisms  $[a, ab, abc, abcd]$,
 $[a, ab, abc, abd]$,
 $[a, ab, abc, ad]$ and 
 $[a, ab, ac, ad]$ are given in Figure~\ref{trees}.

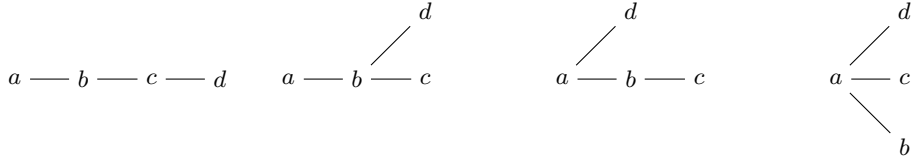
\begin{figure}[!h]
\begin{center}
\begin{tikzpicture}[scale=0.3]
\begin{scope}
\node (a) at (0,0) {$a$};
\node (b) at (3,0) {$b$};
\node (c) at (6,0) {$c$};
\node (d) at (9,0) {$d$};
\draw (a) -- (b);
\draw (b) -- (c);
\draw (c) -- (d);
\end{scope}

\begin{scope}[xshift=12cm]
\node (a) at (0,0) {$a$};
\node (b) at (3,0) {$b$};
\node (c) at (6,0) {$c$};
\node (d) at (6,3) {$d$};
\draw (a) -- (b);
\draw (b) -- (c);
\draw (b) -- (d);
\end{scope}

\begin{scope}[xshift=24cm]
\node (a) at (0,0) {$a$};
\node (b) at (3,0) {$b$};
\node (c) at (6,0) {$c$};
\node (d) at (3,3) {$d$};
\draw (a) -- (b);
\draw (b) -- (c);
\draw (a) -- (d);
\end{scope}

\begin{scope}[xshift=36cm]
\node (a) at (0,0) {$a$};
\node (b) at (3,-3) {$b$};
\node (c) at (3,0) {$c$};
\node (d) at (3,3) {$d$};
\draw (a) -- (b);
\draw (a) -- (c);
\draw (a) -- (d);
\end{scope}
\end{tikzpicture}
\end{center}
\caption{\label{trees}Rooted trees associated with bLSP morphisms}
\end{figure}

\end{remark}
The previous remark allows to enumerate bLSP morphisms (see Sequence A000169 in \href{https://oeis.org/A000169}{The On-}\\\href{https://oeis.org/A000169}{Line Encyclopedia of Integer Sequences} whose first values are $1$, $2$, $9$, $64$, $625$, $7776$, $117649$, $2097152$).

Here follows some examples of bLSP morphisms. 
\begin{itemize} 
\item $\SbLSP(\{a, b\}) = \{ [a, ab], [ba, b] \}$. These morphisms are well-known in the context of Sturmian words. They are denoted $\tau_a$ and $\tau_b$ in \cite{BertheHoltonZamboni2006} from which it can be seen that standard Sturmian words are non-periodic $\{\tau_a, \tau_b\}$-adic words (see also \cite{LeveRichomme2007TCS}). 

\item $\SbLSP(\{a, b, c\}) = \{ 
[a, ab, abc],$ 
$[a, ab, ac],$
$[a, acb, ac],$
$[ba, b, bac],$ 
$[ba, b, bc],$ 
$[bca, b, bc],$
$[ca, cb, c],$ 
$[ca, cab, c],$ 
$[cba, ca, c]\} =$
$\{ p^{-1} \circ [a, ab, abc] \circ p, p^{-1} \circ [a, ab, ac] \circ p \mid p \in perm(A)\}$
where $perm(A)$ is the set of all endomorphisms of $A^*$ whose restriction to the set of letters is a permutation of the alphabet.
As mentioned in the introduction, these sets $\SbLSP(\{a, b, c\})$ is also the set of mirror morphisms considered in \cite{Berthe_Labbe2015AAM}, that is mirrors of the Poincaré substitutions (defined for $\{i, j, k\} = \{a, b, c\}$ by $i \mapsto ijk$, $j \mapsto jk$, $k \mapsto k$) and the Arnoux-Rauzy substitutions (defined for $\{i, j, k\} = \{a, b, c\}$ by $i \mapsto ik$, $j \mapsto jk$, $k \mapsto k$).

\item The set $\SbLSP(\{a, b, c, d\})$ is the set of all morphisms on the form $p^{-1}\circ f \circ p$ for $p \in perm(A)$, and $f$ being one of the following morphisms:
 $[a, ab, abc, abcd]$,
 $[a, ab, abc, abd]$,
 $[a, ab, abc, ad]$ and 
 $[a, ab, ac, ad]$.

\end{itemize}

We end this section with some basic properties of bLSP morphisms that follow directly from the definition.
For a non-empty word $u$, let $\first(u)$ denote its first letter, $\last(u)$ its last letter and $\alp(u)$ its set of letters.

\begin{property}
\label{bLSP properties}Let $f$ be a bLSP morphism over the alphabet $A$.
\begin{enumerate}
\item there exists a unique letter $\alpha \in A$ such that for all $\beta \in A$, $\first(f(\beta)) = \alpha$;
\item for all $\beta \in A$, $\last(f(\beta)) = \beta$;
\item \label{rem:unique letter} there exists a unique letter $\alpha \in A$ such that $f(\alpha) = \alpha$: $\alpha = \first(f)$;
\item $f(A)$ is a suffix code (no word of $f(A)$ is a suffix of another word in $f(A)$);
\item $f$ is injective both on the set of finite words and the set of infinite words;
\item for all $\beta \in A$, $x$, $y \in A^*$, if $|x| = |y|$ and if $x\beta$ and $y\beta$ are  factors of words in $f(A)$, then $x = y$;
\item \label{property1} for all letters $\beta$, $\gamma$, $|f(\beta)|_\gamma \leq 1$.
\end{enumerate}

\end{property}

\section{\label{sec:s-adicity}$\SbLSP$-Adicity of Infinite LSP Words}

\begin{proposition}
\label{characLSPwords} Any infinite LSP word is $\SbLSP$-adic.
\end{proposition}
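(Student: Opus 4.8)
The plan is to establish one desubstitution step and then iterate. I will show that every infinite LSP word $\bw$ over an alphabet $A$ with $\#A\geq 2$ can be written $\bw=f(\bw')$ for some $f\in\SbLSP(A)$ and some infinite LSP word $\bw'$; since $\bw'$ is again LSP, repeating this yields an infinite sequence $\bw=\bw_1,\bw_2,\ldots$ with $\bw_n=f_n(\bw_{n+1})$ and $f_n\in\SbLSP$, which is exactly $\SbLSP$-adicity in the broad sense of the introduction (the unary case $\bw=\alpha^\omega$ is $\{[\alpha]\}$-adic and is the degenerate base case). To build $f$, set $\alpha=\first(\bw)$: note that $\first(f(\beta))=\alpha$ forces $\alpha$ to be the first letter of $\bw$, so $\alpha$ is not a free choice. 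As recalled in the introduction, infinite LSP words are uniformly recurrent, so $\alpha$ occurs infinitely often with bounded gaps, and $\bw$ factorizes uniquely as a concatenation $B_1B_2\cdots$ of return words to $\alpha$ (each $B_i$ begins with $\alpha$ and contains no other $\alpha$), with only finitely many distinct blocks.

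The heart of the proof is the following uniqueness lemma, which is exactly where the LSP hypothesis enters: for every letter $\beta\in\alp(\bw)$ there is exactly one factor of $\bw$ lying in $\alpha(A\setminus\{\alpha\})^*$ and ending in $\beta$; I will call it $p_\beta$. Existence is immediate (inside the block containing any occurrence of $\beta$, take the prefix up to the first $\beta$). For uniqueness I take two such factors $X=\alpha u\beta$ and $Y=\alpha u'\beta$ and consider their longest common suffix $s$; it ends in $\beta$, hence is non-empty, and if it were a proper suffix of both then the differing letters preceding $s$ in $X$ and $Y$ would make $s$ a left special factor, so by the LSP property $s$ would be a prefix of $\bw$ and thus begin with $\alpha$ --- impossible, since $s$ is a suffix of $\alpha u\beta$, in which $\alpha$ occurs only in first position, unless $s=X$. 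Thus $s\in\{X,Y\}$, and the same remark forces $X=Y$. In particular distinct blocks have distinct last letters, so there are at most $\#A$ of them. Granting the lemma, I set $f(\beta)=p_\beta$: then $\first(f(\beta))=\alpha$, $p_\alpha=\alpha$, and for $\beta\neq\alpha$ the word $p_\beta$ with its last letter removed again lies in $\alpha(A\setminus\{\alpha\})^*$, ends in some letter $\gamma$, and by uniqueness equals $p_\gamma$; hence $f(\beta)=f(\gamma)\beta$ and $f$ is a $\bLSP$ morphism. Finally each block $B_i$ lies in $\alpha(A\setminus\{\alpha\})^*$ and equals $p_{\last(B_i)}=f(\last(B_i))$, so $\bw=f(\bw')$ with $\bw'=\last(B_1)\last(B_2)\cdots$.

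It remains to check that $\bw'$ is LSP, so that the step can be iterated. The decisive point is again that distinct blocks have distinct last letters, so the letter immediately preceding any block-aligned occurrence of an image $f(v')$ in $\bw$ equals the $\bw'$-letter preceding $v'$. Consequently, if $v'$ is a left special factor of $\bw'$, its two distinct left extensions in $\bw'$ produce two distinct letters preceding $f(v')$ in $\bw$, so $f(v')$ is left special in $\bw$; by the LSP property of $\bw$ it is a prefix of $\bw=f(\bw')$, and unique decoding along the occurrences of $\alpha$ (which mark precisely the block boundaries) together with the injectivity of $f$ (Property~\ref{bLSP properties}) shows that $v'$ is a prefix of $\bw'$. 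Hence $\bw'$ is LSP. I expect the uniqueness lemma to be the main obstacle: once it is in place, the $\bLSP$ structure of $f$, the identity $\bw=f(\bw')$, and the transfer of the LSP property to $\bw'$ all follow directly; the only remaining care concerns the bookkeeping when $\alp(\bw')$ is a proper subset of $A$, so that the alphabet may shrink along the iteration and eventually collapse to the degenerate unary case.
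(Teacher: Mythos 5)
Your desubstitution step is sound and is a genuinely different route to Lemma~\ref{lemma2} of the paper: where the paper observes that, by the LSP property, the left-extension graph of the letters is a tree rooted at $\alpha$ and reads $f$ off its branches, you characterize the blocks directly as the unique $\alpha$-initiated, $\alpha$-free factors ending in each letter, with the LSP hypothesis entering through a longest-common-suffix argument; the bLSP structure of $f$ and the identity $\bw = f(\bw')$ then follow correctly. The gap is in your last step, the transfer of the LSP property to $\bw'$ (the analogue of Lemma~\ref{lemma1}). From ``$v'$ left special in $\bw'$'' you correctly deduce that $f(v')$ is left special in $\bw$, hence a prefix of $\bw$. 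But the inference ``$f(v')$ is a prefix of $f(\bw')$, so unique decoding along the occurrences of $\alpha$ plus injectivity of $f$ gives that $v'$ is a prefix of $\bw'$'' is false as a general implication: the set of blocks $f(A)$ is a suffix code (Property~\ref{bLSP properties}) but in general \emph{not} a prefix code, so one block can be a proper prefix of another. Concretely, for $f = [a, ab, abc]$, if $\bw'$ begins with $c$ and $v' = b$, then $f(v') = ab$ is a prefix of $f(\bw') = abc\cdots$ while $b$ is not a prefix of $\bw'$. Occurrences of $\alpha$ mark where blocks \emph{begin}, but nothing in your argument guarantees that $f(v')$ \emph{ends} at a block boundary of $\bw$, so the decoding of its last block is ambiguous, and writing $v' = v'_1\cdots v'_m$, $\bw' = w'_1w'_2\cdots$, you only get $v'_i = w'_i$ for $i < m$ and $f(v'_m)$ a prefix of $f(w'_m)$.

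The missing idea is exactly the one the paper flags in its proof of Lemma~\ref{lemma1} (``here the fact that $\bw$ is infinite is needed: each factor is followed by a letter whose image begins with $\alpha$''). Since $\bw'$ is infinite, every occurrence of $xv'$ and of $yv'$ in $\bw'$ is followed by some letter, and the image of that letter begins with $\alpha$; hence $xf(v')\alpha$ and $yf(v')\alpha$ are factors of $\bw$, so that $f(v')\alpha$ -- not merely $f(v')$ -- is left special and therefore a prefix of $\bw$. The trailing $\alpha$ certifies that $|f(v')|$ is a block boundary of $\bw$; then comparison of the blocks delimited by consecutive occurrences of $\alpha$, together with $\last(f(\beta)) = \beta$, yields $v'_i = w'_i$ for all $i \leq m$, i.e., $v'$ is a prefix of $\bw'$. (In your counterexample scenario this repaired argument shows $aba$ would have to be a prefix of $abc\cdots$, correctly ruling out that $b$ is left special in such a $\bw'$.) With this one repair, plus the routine extension of $f$ from $\alp(\bw)$ to a bLSP morphism on all of $A$, your proof of the proposition is complete.
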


Given a set $S$ of morphisms, in order to prove that infinite words verifying a property $P$ are $S$-adic, it suffices to prove that for all infinite words $\bw$ verifying $P$ that:
\begin{enumerate}
\item there exists $f \in S$ and an infinite word $\bw'$ such that $\bw = f(\bw')$, and
\item if $\bw = f(\bw')$ with $f \in S$, then $\bw'$ verifies Property $P$.
\end{enumerate}

Hence Proposition~\ref{characLSPwords} is a direct consequence of the next two lemmas.

\begin{lemma}
\label{lemma2}
Given any finite or infinite LSP word $\bw$, 
there exist a bLSP morphism $f$ on $alph(\bw)$ and an infinite word $\bw'$ such that $\bw = f(\bw')$.
\end{lemma}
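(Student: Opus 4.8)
The plan is to desubstitute $\bw$ by cutting it at the occurrences of its first letter. Write $a = \first(\bw)$. Since every infinite LSP word is uniformly recurrent (as recalled in the introduction), the letter $a$ occurs infinitely often, so I may factor $\bw$ uniquely as $\bw = B_1 B_2 B_3 \cdots$, where each \emph{block} $B_i$ starts with $a$ and contains no other occurrence of $a$; equivalently $B_i \in a(\alp(\bw) \setminus \{a\})^*$, and the $B_i$ are precisely the return words to $a$. I will set $\bw' = \beta_1 \beta_2 \beta_3 \cdots$ with $\beta_i = \last(B_i)$, and I will recover the morphism $f$ from the tree of all \emph{block-prefixes}, i.e. the nonempty prefixes of the blocks together with the one-letter word $a$.

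The heart of the proof — and the only place the LSP hypothesis is used — is the following determinacy statement: \emph{any occurring factor of $\bw$ of the form $a y$ with $y \in (\alp(\bw)\setminus\{a\})^*$ is determined by its last letter.} To prove it I would take two such factors $a y_1$ and $a y_2$ sharing the same last letter and assume them distinct. Their longest common suffix $s$ is nonempty; if $s$ is a proper suffix of both, then by maximality the two letters preceding $s$ in $a y_1$ and $a y_2$ are distinct, so $s$ is a left special factor of $\bw$, hence a prefix of $\bw$ by the LSP property, hence $\first(s) = a$ — impossible, since a proper suffix of $a y_i$ avoids $a$. Otherwise one of the two words is a suffix of the other, which again forces a second occurrence of $a$ inside a word of the form $a y_i$, a contradiction. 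This is the main obstacle; once established it applies uniformly to full blocks and to block-prefixes.

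With determinacy in hand the rest is bookkeeping. The set $\widehat{P}$ of all block-prefixes is closed under deleting the last letter, and by determinacy the map sending a word of $\widehat{P}$ to its last letter is injective; thus $\widehat{P}$ is exactly the vertex set of a labelled rooted tree rooted at $a$. Define $f(\gamma)$, for $\gamma \in \alp(\bw)$, to be the unique element of $\widehat{P}$ ending in $\gamma$; this is total, because every occurrence of a letter $\gamma \neq a$ yields a block-prefix ending in $\gamma$ (and $a$ is the root). Then $f(a) = a$, and deleting the last letter of $f(\gamma) = a y \gamma$ one gets $f(\gamma) = f(\last(a y))\,\gamma$, so $f$ satisfies the two defining conditions of a bLSP morphism on $\alp(\bw)$. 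Finally, by determinacy the unique element of $\widehat{P}$ ending in $\beta_i$ is the full block $B_i$, whence $f(\beta_i) = B_i$ and $f(\bw') = B_1 B_2 \cdots = \bw$.

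Using block-prefixes rather than only the full blocks is precisely what makes $f$ total on $\alp(\bw)$ and forces the tree structure for free, sidestepping any separate treatment of letters that never end a block. For a finite LSP word the same cut at $a$ works verbatim, except that the last block may be truncated; one then obtains a finite directive word $\bw'$, handled by the identical determinacy argument.
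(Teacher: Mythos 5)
Your proof is correct and follows essentially the same route as the paper: both desubstitute $\bw$ by cutting at the occurrences of its first letter and use the LSP property to show that the resulting blocks organize the letters into a rooted tree whose branch words define the bLSP morphism $f$, with $\bw'$ read off as the sequence of last letters of blocks. The only cosmetic difference is that the paper obtains the tree structure from the length-one observation that no letter other than $\first(\bw)$ can be left special, whereas you prove the stronger global determinacy statement (block-prefixes are determined by their last letter) via a longest-common-suffix argument; both uses of the LSP hypothesis are essentially the same.
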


\begin{proof}
Let $\bw$ be a non-empty finite or infinite LSP word and let $\alpha$ be its first letter.
Let $X$ be the set of words over $\alp(w)\setminus\{\alpha\}$ such that $w$ can be factorized over $\{\alpha\} \cup X$.
Let $G$ be the graph $(\alp(w), E)$ with $E$ the set of edges $(\beta, \gamma)$ such that $\beta\gamma$ is a factor of a word $\alpha u$ with $u \in X$. By LSP Property of $\bw$, each letter occurring in a word of $X$ is not left special in $\bw$.
Hence $G$ is a rooted tree with $\alpha$ as root, that is, for any letter $\beta$, there exists a unique path from $\alpha$ to $\beta$. We let $u_\beta$ denote the word obtained by concatenating the letters occurring in the path. Let $f$ be the morphism defined by $f(\beta) = u_\beta$. By construction, $f$ is bLSP and $\bw = f(\bw')$ for a word $\bw'$.
\qed\end{proof}

\begin{remark}\rm
\label{rm_lemma2}
The word $\bw'$ in Lemma~\ref{lemma2} is unique.
The morphism $f$ is not unique but its restriction to $alph(\bw')$ is.
It can also be observed that this restriction is entirely defined by the first letter of $\bw$ and
the factors of length two of $\bw$.
\end{remark}

\begin{lemma}
\label{lemma1}
For any bLSP morphism $f$ and any infinite word $\bw$, if $f(\bw)$ is LSP then $\bw$ is LSP.
\end{lemma}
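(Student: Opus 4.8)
The plan is to prove the statement directly: assuming $f(\bw)$ is LSP, I will show that every left special factor $u$ of $\bw$ occurs as a prefix of $\bw$. The empty factor is trivially a prefix, so I may assume $u$ is non-empty. The engine of the proof is a \emph{synchronization} observation about $f$: since every image $f(\beta)$ begins with $\alpha := \first(f)$ and contains the letter $\alpha$ at most once (Property~\ref{bLSP properties}, item~\ref{property1}), the occurrences of $\alpha$ in $f(\bw) = f(c_0)f(c_1)\cdots$ (where $\bw = c_0 c_1 \cdots$) are \emph{exactly} the positions at which a block $f(c_i)$ begins. Reading off the positions of $\alpha$ therefore recovers the block decomposition of $f(\bw)$.

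First I would transport left specialness through $f$. If $u$ is left special, there are distinct letters $a,b$ with $au$ and $bu$ both factors of $\bw$; then $f(a)f(u)$ and $f(b)f(u)$ are factors of $f(\bw)$, and since $\last(f(a)) = a \neq b = \last(f(b))$ (Property~\ref{bLSP properties}), the word $f(u)$ is immediately preceded by two different letters. Hence $f(u)$ is left special in $f(\bw)$. The naive move would be to invoke the LSP hypothesis directly on $f(u)$, but this is exactly the step that is too weak: $f(u)$ could be a prefix of $f(\bw)$ while ending \emph{inside} a block, i.e.\ with $f(d_{k-1})$ a proper prefix of $f(c_m)$ where $u = d_0\cdots d_{k-1}$, and then the last letters of $u$ need not match those of $\bw$. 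Resolving this mismatch is the main obstacle.

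The fix I would use is to append the root letter: I claim $f(u)\alpha$ is also left special. Because $\bw$ is infinite, $au$ extends to a factor $aue$ for some letter $e$, so $f(a)f(u)f(e)$ is a factor of $f(\bw)$; as $\first(f(e)) = \alpha$, the word $a\,f(u)\,\alpha$ is a factor, and symmetrically so is $b\,f(u)\,\alpha$. With $a \neq b$ this shows $f(u)\alpha$ is left special. Now the LSP hypothesis on $f(\bw)$ forces $f(u)\alpha$ to be a prefix of $f(\bw)$, so the letter of $f(\bw)$ at position $|f(u)|$ is $\alpha$. By the synchronization observation this position is a block boundary, which means $f(u) = f(c_0)\cdots f(c_{m-1})$ is a concatenation of whole blocks. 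Injectivity of $f$ (Property~\ref{bLSP properties}) then yields $u = c_0 \cdots c_{m-1}$, a prefix of $\bw$, completing the argument.

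I expect the only delicate points to be the synchronization step (deducing from the single-$\alpha$-per-image property that $\alpha$ marks block boundaries in the infinite product $f(\bw)$) and the justification that $f(u)\alpha$, rather than $f(u)$, is the right left special factor to feed into the LSP hypothesis; once the trailing $\alpha$ is in place the block decomposition is pinned down and the conclusion is immediate.
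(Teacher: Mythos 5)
Your proof is correct, but it follows a genuinely different route from the paper's. The paper argues by contraposition: assuming $\bw$ is not LSP, it chooses a left special non-prefix factor of \emph{minimal length}, writes it as $ub$ with $ua$ a prefix of $\bw$ and $a \neq b$, and then manufactures a left special non-prefix factor of $f(\bw)$ of the form $f(u)v\beta'$, where $v$ is the longest common prefix of $f(a)\alpha$ and $f(b)\alpha$ and $\alpha', \beta'$ are the distinct letters following $v$ in each; the fact that $f(a)\alpha$ and $f(b)\alpha$ are prefix-incomparable (so that such a $v$ exists) again rests on $\alpha$ occurring only as the first letter of images. You instead prove the implication directly: you transport an arbitrary left special factor $u$ of $\bw$ to the left special factor $f(u)\alpha$ of $f(\bw)$, invoke the LSP hypothesis to force it to be a prefix, and then desubstitute via $\alpha$-synchronization plus injectivity of $f$ to conclude that $u$ is a prefix of $\bw$. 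Both arguments hinge on the same structural facts about bLSP morphisms ($\last(f(\beta)) = \beta$; each image contains $\alpha$ exactly once, as its first letter; infiniteness of $\bw$ to append a trailing $\alpha$), but your version dispenses with the minimal-counterexample choice and the prefix-incomparability step, at the price of making the synchronization and injectivity arguments explicit. Your identification of the trailing $\alpha$ as the crucial fix --- since $f(u)$ alone can be a prefix of $f(\bw)$ without $u$ being a prefix of $\bw$ --- is exactly the delicate point, and your handling of it is sound; this desubstitution viewpoint is in fact closer in spirit to the paper's later arguments (Lemma~\ref{origine fragilities} and Step~1 of Theorem~\ref{th_carac}) than to its proof of Lemma~\ref{lemma1} itself.
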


\begin{proof}
Assume by contradiction that $\bw$ is not LSP. 
This means that $\bw$ has (at least) one left special factor that is not one of its prefixes.
Considering such a factor of minimal length,
there exist a word $u$ and letters $a$, $b$, $\beta$, $\gamma$ such that
$a \neq b$, $\beta \neq \gamma$,
$ua$ is a prefix of $\bw$, 
$\beta ub$ and $\gamma ub$ are factors of $\bw$.
Recall that $f$ is a bLSP morphism: let $\alpha =\first(f)$.
The word $f(u)f(a)\alpha$ is a prefix of $f(\bw)$.
Moreover by Property~\ref{bLSP properties}(2), 
words $\beta f(u)f(b)\alpha$ and $\gamma f(u)f(b)\alpha$
are factors of $\bw$ (here the fact that $\bw$ is infinite is needed: each factor is followed by a letter whose image begins with $\alpha$).
As $f(a) \neq f(b)$ and as the letter $\alpha$
occurs only as a prefix in $f(a)$ and $f(b)$, 
$f(a)\alpha$ is not a prefix of $f(b)\alpha$ and, conversely, 
$f(b)\alpha$ is not a prefix of $f(a)\alpha$.
Hence there exist a word $v$ and letters $\alpha'$, $\beta'$ such that $\alpha' \neq \beta'$, 
$v\alpha'$ and $v\beta'$ are respectively prefixes of $f(a)\alpha$ and $f(b)\alpha$.
It follows that $f(u)v\alpha'$ is a prefix of $f(\bw)$ while
$\beta f(u)v\beta'$ and $\gamma f(u)v\beta'$
are factors of $f(\bw)$: $f(\bw)$ is not LSP.
\qed\end{proof}

Observe that Lemma~\ref{lemma1} does not hold for finite words. For instance the word $baa$ is not LSP while its image $abaa$ by the morphism $[a, ab]$ is LSP. 

To end this section let us mention that in the binary case the converses of Lemma~\ref{lemma1} and Proposition~\ref{characLSPwords} hold.

\begin{proposition}
\label{P:carac1_binary}
If $\bw$ is a binary LSP infinite word and if $f$ 
belongs to the set $\{ [a, ab], [ba, b] \}$ then $f(\bw)$ is also LSP. Consequently a binary word is LSP if and only if it is $\{ [a, ab], [ba, b] \}$-adic.
\end{proposition}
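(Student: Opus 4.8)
The plan is to prove the first implication directly and then deduce the equivalence (note that the first implication is exactly the converse of Lemma~\ref{lemma1} in the binary case). By the symmetry exchanging the letters $a$ and $b$, which sends $[a,ab]$ to $[ba,b]$, it suffices to treat $f=[a,ab]$, for which $\first(f)=a$. I would begin by recording the block structure of an image $f(\bw)$: since $f(a)=a$ and $f(b)=ab$, the word $f(\bw)$ factorizes uniquely as a concatenation of blocks $a$ and $ab$, each letter $a$ being the first letter of a block and each letter $b$ being the second. Because $\bw$ is infinite, every block is followed by another block, whose first letter is $a$; in particular the factor $bb$ never occurs in $f(\bw)$. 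Consequently, if $u$ is a left special factor of $f(\bw)$, then $u$ cannot begin with $b$ (otherwise $bu$ would contain $bb$), so every nonempty left special factor of $f(\bw)$ begins with $a=\first(f)$.

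The heart of the argument is to push a left special factor $u$ of $f(\bw)$ down to $\bw$. Reading $u$ along the block decomposition (its first letter $a$ starts a block, by the previous paragraph), exactly two cases occur: either $u$ ends at a block boundary, so $u=f(v)$ for some factor $v$ of $\bw$, or $u$ ends on the first letter of a block $ab$, so $u=f(v)a$. In each case I would examine the two guaranteed occurrences of $u$, one preceded by $a$ and one preceded by $b$. The occurrence preceded by $b$ shows that this $b$ closes a block $f(b)=ab$, so that $bv$ occurs in $\bw$; the occurrence preceded by $a$ forces that $a$ to be a full block $f(a)=a$ (it is followed by the letter $a$ that opens $u$), so that $av$ occurs in $\bw$. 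Hence $v$ is a left special factor of $\bw$. Since $\bw$ is LSP, $v$ is a prefix of $\bw$, whence $f(v)$ is a prefix of $f(\bw)$; and in the second case the letter of $f(\bw)$ following $f(v)$ is again $a=\first(f)$, so $f(v)a=u$ is a prefix as well. Thus every left special factor of $f(\bw)$ is a prefix, i.e.\ $f(\bw)$ is LSP. This case analysis, together with the synchronization of the block decomposition with the two occurrences, is the main technical point of the first implication.

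For the equivalence, one direction is immediate: since $\SbLSP(\{a,b\})=\{[a,ab],[ba,b]\}$, Proposition~\ref{characLSPwords} shows that every binary LSP word is $\{[a,ab],[ba,b]\}$-adic. For the converse I would argue by contradiction using the constructive content of the first part as a descent step. If a $\{[a,ab],[ba,b]\}$-adic word $\bw=\bw_1$, with $\bw_n=f_n(\bw_{n+1})$, were not LSP, it would have a left special factor that is not a prefix; the argument above, read contrapositively, turns such a bad factor $u$ of $\bw_n=f_n(\bw_{n+1})$ into a bad factor $v$ of $\bw_{n+1}$ with $u\in\{f_n(v),\,f_n(v)\,\first(f_n)\}$.

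I expect the genuine obstacle to be the degenerate case of this descent, where the length fails to drop. A computation of block lengths gives $|v|<|u|$ as soon as $v$ contains the letter different from $\first(f_n)$; the only way to avoid strict decrease is that the bad factor is a power of $\first(f_n)$, say $a^k$ for $f_n=[a,ab]$. I would rule this out separately: a left special factor beginning with $a$ can occur only when $f_n=[a,ab]$ (a left special factor of $[ba,b](\bw_{n+1})$ must begin with $b$), and the descent then keeps the same factor $a^k$ while forcing $f_n=f_{n+1}=\cdots=[a,ab]$. Tracking the position of the first occurrence of $b$, which decreases by one at each level, shows that this forces $\bw_n=a^\omega$, a word with no left special factor at all, contradicting that $a^k$ is left special. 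Hence no bad factor is a single-letter power, every descent step strictly decreases the length, and the resulting infinite strictly decreasing sequence of positive integers is impossible. Therefore $\bw$ is LSP, which completes the equivalence.
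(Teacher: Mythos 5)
Your proof is correct in substance, but it is organized quite differently from the paper's. For the first implication, the paper argues by contradiction on a left special factor $V$ of $f(\bw)$ of \emph{minimal length} among those that are not prefixes: writing $V = V'\beta$ with $V'\alpha$ a prefix of $f(\bw)$ and $\{\alpha,\beta\}=\{a,b\}$, it shows $V' = f(u)a$, then that $u\alpha$ is a prefix of $\bw$ while $u\beta$ is left special in $\bw$, contradicting the LSP property of $\bw$. You instead desubstitute an \emph{arbitrary} left special factor $u$ of $f(\bw)$ directly, show that its preimage $v$ is left special in $\bw$, invoke the LSP property of $\bw$ to conclude $v$ is a prefix, and lift that back up; no minimality is needed. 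Both proofs rest on the same synchronization facts (every occurrence of $u$ starts at a block boundary because $u$ starts with $a$; a preceding $b$ closes a block $ab$; a preceding $a$ followed by $a$ is a whole block). For the second part, your treatment is more complete than the paper's, which dismisses the equivalence in one sentence (``a direct consequence of the first part and of Lemma~\ref{lemma2}''); the ``if'' direction does need an argument, since preservation of LSP under one application of $f$ has no base case to induct from in an infinite composition, and your infinite-descent argument (with the degenerate case $a^k$ ruled out by tracking the position of the first occurrence of $b$) supplies exactly that. In effect you have reproved, in the binary case where no fragilities exist, the descent that the paper carries out in full generality in the proof of Theorem~\ref{th_carac}.

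One point to tighten in the first implication: your dichotomy (``$u$ ends at a block boundary'' versus ``$u$ ends on the first letter of a block $ab$'') is a property of an \emph{occurrence} of $u$, not of $u$ itself. When $u$ ends with $a$, the occurrence preceded by $a$ may end at a block boundary while the occurrence preceded by $b$ cuts into an $ab$ block (or vice versa); with the case-1 reading $u=f(v)$, $v=v''a$, the claim ``$bv$ occurs in $\bw$'' can then fail, since desubstituting the cutting occurrence yields only $bv''b$, and $bv=bv''a$ need not occur in $\bw$. The fix is to commit to the factorization $u=f(v)a$ whenever $u$ ends with $a$: the blocks covering any occurrence of $u$ then have preimage $va$ or $vb$, both beginning with $v$, so $av$ and $bv$ occur in $\bw$ no matter how each occurrence completes its last block, and the rest of your argument (including the final lift, which only uses that the letter following the prefix $f(v)$ of $f(\bw)$ is $a=\first(f)$) goes through verbatim. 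When $u$ ends with $b$ every occurrence ends at a boundary, so no issue arises there.
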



\begin{proof}
The second part of the proposition is a direct consequence of the first part and of Lemma~\ref{lemma2}.
We prove the first part for $f = [a, ab]$ (exchanging the roles of $a$ and $b$, the proof is similar for $f = [ba, b]$). Assume $\bw$ is an  infinite LSP word over $\{a, b\}$ such that $f(\bw)$ is not LSP. There exists a left special factor $v$ of $f(\bw)$ which is not a prefix of $f(\bw)$. Choose $v$ of minimal length. Hence $v = v' \beta$ and $v'\alpha$ is a prefix of $f(\bw)$ for a word $v'$ and two letters $\alpha$ and $\beta$ such that $\{\alpha, \beta \} = \{a, b\}$. As the letter $b$ is always preceded in $f(\bw')$ by the letter $a$, the word $v'$ is a non-empty word ending with $a$: hence $v' = f(u)a$ for some word $u$. 
As $f(u)a\alpha$ is a prefix of $f(\bw)$, by the structure of $f$, $u\alpha$ is a prefix of $\bw$.
Also as $f(u)a\beta$ is a left special factor of $f(\bw)$, as $\last(f(a)) = a$ and $\last(f(b)) = b$, the word $u\beta$ is also a left special factor of $\bw$. As $\alpha \neq \beta$, this contradicts the fact that $\bw$ is LSP.
\qed
\end{proof}

\section{\label{sec:fragility}Fragility of Infinite LSP Words}

For alphabets of cardinality at least $3$, the converse of Lemma~\ref{lemma1} is false: there exist an infinite LSP word $\bw$ and a bLSP morphism $f$ such that $f(\bw)$ is not LSP.
For instance, let $\bF$ be the well-known Fibonacci word (the fixed point of the endomorphism $[ab, a]$), and let $g$ be the bLSP morphism $[a, acb, ac]$. 
The word $g^2(\bF)$ begins with the word $g^2(ab) = g(aacb) = aaacacb$ that contains the factor $ac$ which is left special but not a prefix of the word.
Hence the word $g^2(\bF)$ is not LSP while $\bF$ is LSP and $g$ is bLSP (actually one can prove, using Lemma~\ref{when (a, b, c)-breaking morphisms appear} below, that $g(\bF)$ is LSP). 

In what follows, we introduce some properties of LSP words and morphisms that explain in which context a (breaking) bLSP morphism can map a (fragile) infinite LSP word on a non-LSP word.

\begin{definition}
\label{def_fragility}\rm
Let $a, b, c$ be three pairwise distinct letters. An infinite word $\bw$ is $(a, b, c)$-\textit{fragile} if there exist a word $u$ and distinct letters $\alpha$ and $\beta$ such that the word $ua$ is a prefix of $\bw$ and the words $\alpha u b$ 
and $\beta u c$ are factors of $\bw$. 
We will also say that $\bw$ is $(a, b, c, \alpha, \beta)$-fragile when we need letters $\alpha$ and $\beta$. The word $u$ is also called an $(a, b, c, \beta, \gamma)$-\textit{fragility} of $\bw$.
\end{definition}

For instance, the empty word $\varepsilon$ is an $(a, b, c, c, a)$-fragility of $g(\bF)$: $\varepsilon a$ is a prefix of $g(\bF) = aacb\cdots$ while $c\varepsilon b$ and $a\varepsilon c$ are factors of  $g(\bF)$.
More generally any factor $abc$ or $acb$ in an infinite word starting with the letter $a$ 
(and with $a \neq b \neq c \neq a$)
produces an $(a,b,c)$-fragility. 
One can also observe that, by symmetry of the definition, any $(a, b, c)$-fragile word is also $(a, c, b)$-fragile. Finally let us note that no fragility exists in words over two letters (as the definition needs three pairwise distinct letters).

The main idea of introducing the previous notion is that for any $(a, b, c)$-fragile LSP word $\bw$, there exists a bLSP morphism such that $f(\bw)$ is not LSP.
For instance, if $u, \alpha, \beta, \bw$ are as in Definition~\ref{def_fragility}, the word $g(u)aa$ is a prefix of $g(\bw)$ whereas words $\alpha g(u)acb$ and $\beta g(u)ac$ are factors of $g(\bw)$, so that $g(\bw)$ is not LSP since $g(u)ac$ is left special but not a prefix of $g(\bw)$.

\begin{definition}\rm
Let $a, b, c$ be three pairwise distinct letters.
A morphism $f$ is LSP $(a, b, c)$-\textit{breaking}, if for all $(a, b, c)$-fragile infinite LSP word $\bw$, $f(\bw)$ is not LSP.
\end{definition}

For instance, the morphism $g = [a, acb, ac]$ is $(a, b, c)$-breaking.

\begin{lemma}
\label{when (a, b, c)-breaking morphisms appear}
Let $\bw$ be an infinite LSP word and let $f$ be a bLSP morphism.
The following assertions are equivalent:
\begin{enumerate}
\item The word $f(\bw)$ is not LSP;
\item There exist some pairwise distinct letters $a, b, c$ such that $\bw$ is $(a, b, c)$-fragile and 
the longest common prefix of $f(b)$ and $f(c)$ is  strictly longer than the longest common prefix of $f(a)$ and $f(b)$;
\item There exist some pairwise distinct letters $a, b, c$, such that $\bw$ is $(a, b, c)$-fragile 
and $f$ is LSP $(a, b, c)$-breaking.
\end{enumerate}
\end{lemma}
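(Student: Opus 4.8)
The plan is to prove the three assertions equivalent by establishing the cycle $(1)\Rightarrow(2)\Rightarrow(3)\Rightarrow(1)$, leaning on the mechanism already displayed in the proof of Lemma~\ref{lemma1} and on the worked example with the morphism $g$. The most substantial direction is $(1)\Rightarrow(2)$, since $(3)\Rightarrow(1)$ is essentially the definition of LSP $(a,b,c)$-breaking combined with the fragility hypothesis, and $(2)\Rightarrow(3)$ requires exhibiting that the prefix-length condition is exactly what forces breaking.

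\emph{Proof of $(1)\Rightarrow(2)$.} Suppose $f(\bw)$ is not LSP. Following the argument of Lemma~\ref{lemma1} run in reverse, take a left special factor of $f(\bw)$ of minimal length that is not a prefix; I expect this to trace back, via the suffix-code and last-letter properties in Property~\ref{bLSP properties}, to a genuine obstruction inside $\bw$. Concretely, the minimal left special non-prefix factor of $f(\bw)$ should be writable as $f(u)v\beta'$ where $f(u)v\alpha'$ is a prefix and $v\alpha'$, $v\beta'$ are prefixes of two images $f(x)$, $f(y)$ diverging at position $|v|$. Because $\bw$ is itself LSP, the preimage data $ua$ (prefix), $\alpha ub$, $\beta uc$ (factors) must reassemble into an $(a,b,c)$-fragility of $\bw$; the point where the images $f(b)$ and $f(c)$ stay together longer than $f(a)$ and $f(b)$ do is precisely what creates a left special image factor extending beyond the prefix. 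Verifying that the divergence positions translate into the stated strict inequality on longest common prefixes is the main obstacle, and it is where I would spend the most care, tracking how the letter $\first(f)$ propagates through each image.

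\emph{Proof of $(2)\Rightarrow(3)$.} Assume $\bw$ is $(a,b,c)$-fragile and that the longest common prefix of $f(b)$ and $f(c)$ strictly exceeds that of $f(a)$ and $f(b)$. I would show $f$ is LSP $(a,b,c)$-breaking by taking an \emph{arbitrary} $(a,b,c)$-fragile LSP word $\bw''$ with fragility witnessed by some $u''$, $\alpha''$, $\beta''$, and constructing an explicit left special non-prefix factor of $f(\bw'')$. Writing $p_{ab}$ for the common prefix of $f(a),f(b)$ and $p_{bc}$ for that of $f(b),f(c)$, the hypothesis $|p_{ab}|<|p_{bc}|$ means $p_{ab}$ extends differently into $f(a)$ than the shared continuation of $f(b),f(c)$; the word $f(u'')p_{ab}s$, where $s$ is the next letter common to $f(b)$ and $f(c)$ but not matching $f(a)$, is left special in $f(\bw'')$ (appearing after both $\alpha''$ and $\beta''$ via the $b$ and $c$ occurrences) yet the prefix $f(u'')f(a)\cdots$ forces it to be a non-prefix. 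This mirrors the displayed $g$-example verbatim, so I would present it as a direct generalization.

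\emph{Proof of $(3)\Rightarrow(1)$.} This is immediate: by hypothesis $\bw$ is $(a,b,c)$-fragile and $f$ is LSP $(a,b,c)$-breaking, so by the very definition of breaking, $f$ applied to any $(a,b,c)$-fragile LSP word yields a non-LSP word; since $\bw$ is such a word, $f(\bw)$ is not LSP. Assembling the three implications closes the cycle and proves the equivalence. Throughout, the recurring technical point — and the one I flag as the crux — is the bookkeeping on longest common prefixes of the images $f(a),f(b),f(c)$ and how the fixed first letter $\first(f)$ guarantees that each factor of $\bw$ is followed in $f(\bw)$ by an occurrence of $\first(f)$, which is exactly the device that makes the reverse direction of Lemma~\ref{lemma1}'s reasoning go through.
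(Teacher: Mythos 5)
Your overall route --- the cycle $(1)\Rightarrow(2)\Rightarrow(3)\Rightarrow(1)$ --- is exactly the paper's, and your treatments of $(2)\Rightarrow(3)$ and $(3)\Rightarrow(1)$ are correct and essentially identical to the printed proof: the paper writes $f(a)=v\delta w_1$, $f(b)=v\gamma w_2$, $f(c)=v\gamma w_3$ with $\delta\neq\gamma$ and exhibits $f(u')v\gamma$ as a left special non-prefix factor of the image of an \emph{arbitrary} $(a,b,c)$-fragile LSP word; your $p_{ab}$, $s$ construction is the same argument in different notation. The problem is $(1)\Rightarrow(2)$: what you have written there is a plan, not a proof. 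You yourself flag the two decisive verifications (``must reassemble into an $(a,b,c)$-fragility'', ``verifying that the divergence positions translate into the stated strict inequality \dots is the main obstacle'') and then do not carry either of them out, so the lemma is not established.

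Here is concretely what is missing. Take a left special factor $V$ of $f(\bw)$ that is not a prefix, let $v$ be the longest common prefix of $V$ and $f(\bw)$, and let $a'\neq b'$ be the letters such that $va'$ is a prefix of $f(\bw)$ and $vb'$ is a prefix of $V$; since only $\alpha=\first(f)$ can be left special in $f(\bw)$, the word $v$ is non-empty, starts with $\alpha$, and decomposes as $v=f(u)\alpha x$ with $|x|_\alpha=0$. Pulling back along $f$ (using that $\alpha$ marks the beginnings of images of letters) yields letters $a,b,c$ with $ua$ a prefix of $\bw$ and $\beta ub$, $\gamma uc$ factors of $\bw$, where $\beta\neq\gamma$ are the letters preceding $V$. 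The divergence $a'\neq b'$ gives $a\neq b$ and $a\neq c$ --- but it gives \emph{nothing} about $b$ versus $c$, and this is precisely where the hypothesis that $\bw$ is LSP must enter: if $b=c$, then $ub$ is left special in $\bw$ (preceded by the distinct letters $\beta$ and $\gamma$) yet is not a prefix of $\bw$ (the prefix continues with $a\neq b$), contradicting LSP. Your sketch runs this backwards, invoking LSP-ness to assert that the data ``reassembles'' into a fragility, which is the very point to be proved. Second, the strict inequality on longest common prefixes, which you defer, falls out of the same decomposition once it is set up: $\alpha x b'$ is a common prefix of $f(b)$ and $f(c)$ (both were chosen so that their images continue $v$ with the letter $b'$), whereas $\alpha x$ is the \emph{entire} longest common prefix of $f(a)$ and $f(b)$, because $f(a)\alpha$ and $f(b)\alpha$ part ways at $a'\neq b'$ immediately after $\alpha x$. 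Without these two steps the equivalence is unproved; note that $(1)\Rightarrow(2)$ is the only direction in which the assumption that $\bw$ is LSP plays any role, so it cannot be waved through.
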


\begin{proof}
$1 \Rightarrow 2$.
Assume first that $f(\bw)$ is not LSP. 
There exists a left special factor $V$ of $f(\bw)$ which is not a prefix of $f(\bw)$.
Let $v$ be the longest common prefix of $V$ and $f(\bw)$.
Let $a', b'$ be the letters
such that $va'$ is a prefix of $f(\bw)$ 
and $vb'$ is a prefix of $V$: by construction $a' \neq b'$.
Let also $\beta$, $\gamma$ be distinct letters such that $\beta V$ and $\gamma V$ are factors of $f(\bw)$ (also $\beta v b'$ and $\gamma v b'$ are factors of $f(\bw)$).

By Property~\ref{bLSP properties}, 
the letter $\alpha = \first(f)$ is
the unique letter that can be left special in $f(\bw)$.
This implies $v \neq \varepsilon$ and $\first(v) = \first(f)$.
As $\alpha$ occurs exactly at the first position in all images of letters, 
occurrences of $\alpha$ mark the beginning of images of letters in $f(\bw)$.
Considering the last occurrence of $\alpha$ in $v$,
we can write $v = f(u)\alpha x$ with $|x|_\alpha = 0$.
Let $a$, $b$, $c$ be letters such that:
\begin{itemize}
\item $ua$ is a prefix of $\bw$, and, $va' = f(u)\alpha x a'$ is a prefix of $f(ua)$ when $a' \neq \alpha$ or $v = f(ua)$ when $a' = \alpha$;
\item $\beta ub$ is a factor of $\bw$, and, $\beta vb'$ is a prefix of $\beta f(ub)$ when $b' \neq \alpha$ or $v = f(ub)$ when $b' = \alpha$;
\item $\gamma uc$ is a factor of $\bw$, and, $\gamma vb'$ is a prefix of $\gamma f(uc)$  when $b' \neq \alpha$ or $v = f(uc)$ when $b' = \alpha$.
\end{itemize}

As $a' \neq b'$, we have $a \neq b$ and $a \neq c$.
Observe that until now we did not use the fact that $\bw$ is LSP.
This implies $b \neq c$ (and so $b' \neq \alpha$).
Indeed otherwise $ub$ would be a left special factor of $\bw$ without being one of its prefixes: a contradiction with the fact that $\bw$ is an LSP word.
Thus $\bw$ is $(a, b, c)$-fragile.

This ends the proof of Part $1 \Rightarrow 2$ as $\alpha xb'$ is a common prefix of $f(b)$ and $f(c)$ and $\alpha x$ is the longest common prefix of $f(a)$ and $f(b)$.

$2 \Rightarrow 3$.
By hypothesis, $f(a) = v \delta w_1$, $f(b) = v \gamma w_2$ and $f(c) = v \gamma w_3$ for letters $\delta, \gamma$ 
and words $w_1$, $w_2$ and $w_3$ with $\delta \neq \gamma$.
Let $\bw'$ be any LSP $(a, b, c)$-fragile infinite word. 
Let $u'$, $\alpha'$, $\beta'$ be the word and letters such that 
$u'a$ is a prefix of $\bw'$ while $\alpha' u' b$ and $\beta' u' c$ are factors of $\bw'$ with $\alpha' \neq \beta'$.
The word $f(\bw')$ has $f(u')v\delta$ as a prefix and words $\alpha'f(u')v\gamma$ and $\beta'f(u')v \gamma$ as factors.
As $\delta \neq \gamma$, the word $f(\bw')$ is not LSP. The morphism $f$ is LSP $(a,b,c)$-breaking.

$3 \Rightarrow 1$.
This follows the definition of $(a, b, c)$-fragile words and LSP $(a, b, c)$-breaking morphisms.
\qed\end{proof}

Observe that we have also proved the next result.

\begin{corollary}
\label{carac LSP breaking morphisms}
A bLSP morphism is LSP $(a, b, c)$-breaking for pairwise distinct letters $a$, $b$ and $c$ if and only if the longest common prefix of $f(b)$ and $f(c)$ is strictly longer than the longest common prefix of $f(a)$ and $f(b)$.
\end{corollary}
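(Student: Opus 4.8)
\medskip\noindent\textbf{Proof plan.}
The plan is to prove the two implications separately and to recycle as much as possible of the computations already made in the proof of Lemma~\ref{when (a, b, c)-breaking morphisms appear}. Throughout, write $p=|\mathrm{lcp}(f(a),f(b))|$ and $q=|\mathrm{lcp}(f(b),f(c))|$ for the lengths of the relevant longest common prefixes, so that the stated condition is $q>p$. For the \emph{if} direction I would observe that the argument establishing $2\Rightarrow 3$ in that lemma never uses the fragility of the particular word $\bw$: starting only from $q>p$ it factorizes $f(a)=v\delta w_1$, $f(b)=v\gamma w_2$, $f(c)=v\gamma w_3$ with $\delta\neq\gamma$, and shows that \emph{every} $(a,b,c)$-fragile infinite LSP word $\bw'$ is sent to a non-LSP word (the factor $f(u')v\gamma$ becomes left special without being a prefix of $f(\bw')$). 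This is exactly the definition of $f$ being LSP $(a,b,c)$-breaking, so $q>p$ yields the breaking property for the fixed triple $(a,b,c)$.

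For the \emph{only if} direction I would argue by contraposition, assuming $q\le p$. Reading $f$ through the rooted tree associated with it, this says that among the leaves $a$, $b$, $c$ the pair $\{b,c\}$ is \emph{not} the one whose paths from the root separate last. Here the key structural fact is that, for three leaves of a tree, the two smallest of the three pairwise longest-common-prefix lengths always coincide (both equal the meet of all three): hence if $q\le p$ then $q$ cannot be the unique smallest of $\{p,q,|\mathrm{lcp}(f(a),f(c))|\}$, which forces $q\le|\mathrm{lcp}(f(a),f(c))|$ as well. In particular the symmetric condition for the triple $(a,c,b)$, namely $q>|\mathrm{lcp}(f(a),f(c))|$, also fails. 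To contradict that $f$ is $(a,b,c)$-breaking it then suffices, by definition, to exhibit a single $(a,b,c)$-fragile infinite LSP word $\bw$ with $f(\bw)$ LSP. By the equivalence $1\Leftrightarrow 2$ of Lemma~\ref{when (a, b, c)-breaking morphisms appear}, $f(\bw)$ is LSP as soon as \emph{no} triple witnessing a fragility of $\bw$ satisfies the prefix condition; so it is enough to build an $(a,b,c)$-fragile LSP word whose only fragilities are the symmetric pair $(a,b,c)$ and $(a,c,b)$, both of which we have just seen fail the condition.

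The construction of this witness is where I expect the real difficulty to lie. My first attempt would be to start from a binary LSP word $\bw$ over $\{a,b\}$ beginning with $a$ (for instance a standard Sturmian word): being over two letters it has no fragility at all, so by $1\Leftrightarrow 2$ its image under any bLSP morphism is automatically LSP. I would then apply a carefully chosen bLSP morphism introducing the letter $c$ so that the resulting LSP word becomes $(a,b,c)$-fragile while creating no other branching pattern; alternatively one can try to write the witness directly as an $\SbLSP$-adic word with a periodic directive sequence and read off its fragilities from its first letter together with its factors of length two, as in Remark~\ref{rm_lemma2}. The main obstacle is precisely to guarantee \emph{simultaneously} that the exhibited word is LSP and that it carries no fragility triple other than $(a,b,c)$ and $(a,c,b)$; controlling \emph{all} the fragilities of an explicit LSP word, rather than merely producing one, is the delicate point, and it is what makes the \emph{only if} direction genuinely more than a restatement of the lemma.
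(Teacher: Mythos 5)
Your \emph{if} direction is correct and coincides with what the paper actually proves: the $2\Rightarrow 3$ step of Lemma~\ref{when (a, b, c)-breaking morphisms appear} uses only the prefix hypothesis on $f(a)$, $f(b)$, $f(c)$ and never the particular fragile word from assertion~2, so it yields the implication for the fixed triple $(a,b,c)$. Your reduction of the \emph{only if} direction is also sound: the contraposition, the ultrametric property of longest common prefixes showing that $q\le p$ forces $q\le r$ (where $p$, $q$, $r$ denote the lengths of the longest common prefixes of $f(a),f(b)$, of $f(b),f(c)$ and of $f(a),f(c)$), hence that $(a,c,b)$ fails the condition too, and the observation that by $1\Leftrightarrow 2$ it suffices to exhibit one $(a,b,c)$-fragile infinite LSP word whose only fragility triples are $(a,b,c)$ and $(a,c,b)$. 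But there your proof stops: you explicitly leave the construction of this witness as an unresolved ``main obstacle''. This is a genuine gap, and not a cosmetic one: without such a witness, ``$f$ is LSP $(a,b,c)$-breaking'' could hold vacuously, or hold merely because every $(a,b,c)$-fragile LSP word happened to carry some parasitic fragility triple satisfying the prefix condition; ruling this out is exactly the content of the converse implication.

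The missing step can be closed with material already in the paper, which is what your sketch lacks. Take $h=[a,acb,ac]$ (with letters renamed so that its alphabet is $\{a,b,c\}$) and a standard Sturmian word $\mathbf{s}$ over $\{a,b\}$, and consider $h(\mathbf{s})$. Since $\mathbf{s}$ is over two letters it has no fragility at all, so by $1\Leftrightarrow 2$ of Lemma~\ref{when (a, b, c)-breaking morphisms appear} the word $h(\mathbf{s})$ is LSP; and it is $(a,b,c,c,a)$-fragile via $u=\varepsilon$, because it starts with $a$ and contains $cb$ and $ac$ inside $h(b)=acb$. The control of \emph{all} its fragilities --- precisely the point you call delicate --- is Lemma~\ref{origine fragilities}: any fragility of $h(\mathbf{s})$ is either ``new'' (so its triple begins with $\first(h)=a$, and over the alphabet $\{a,b,c\}$ must be $(a,b,c)$ or $(a,c,b)$) or ``propagated'' from a fragility of $\mathbf{s}$, which cannot exist. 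Since both surviving triples fail the prefix condition when $q\le p$, the word $f(h(\mathbf{s}))$ is LSP, so $f$ is not LSP $(a,b,c)$-breaking, completing the contraposition. (For comparison: the paper dispatches the corollary with the single sentence that it was ``also proved'' within the lemma, which literally covers only the \emph{if} direction; your tree-property observation together with the witness above is what a fully explicit argument for the converse requires, so your plan was aimed at the right target but stopped one lemma short of the tools needed to finish.)
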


\section{\label{sec:origin}Origin of Fragilities}

Before characterizing infinite LSP words, we need to know how fragilities in an LSP word can appear. 
This is explained by next result. 
For a set $X$ of words, we let ${\rm Fact}(X)$ denote the set of factors of words in $X$.

\begin{lemma}
\label{origine fragilities}
Assume a word $u$ is an $(a, b, c, \beta, \gamma)$-fragility of $f(\bw)$ for an infinite word $\bw$ (not necessarily LSP) over an alphabet $A$ and $f$ is a bLSP morphism (by definition of fragilities, $a$, $b$, $c$, $\beta$, $\gamma$ are letters).

\begin{itemize}
\itemsep0cm

\item (New fragilities) If $u = \varepsilon$, then $a = \first(f)$ and $\beta b$, $\gamma c \in {\rm Fact}(f(\alp(\bw)))$.

\item (Propagated fragilities) If $u \neq \varepsilon$, there exist letters $a'$, $b'$, $c'$ in $\alp(\bw)$ and an $(a', b', c', \beta, \gamma)$-fragility $v$ of $\bw$ such that $|v| < |u|$, $f(v)$ is a proper prefix of $u$ and words $ua$, $\beta u b$, $\gamma u c$ are respectively prefixes of $f(va')\alpha$, $\beta f(vb')\alpha$, $\gamma f(vc')\alpha$ with $\alpha =\first(f)$.

\end{itemize}\end{lemma}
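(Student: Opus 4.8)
The plan is to argue by \emph{desubstitution}, tracing the occurrences that witness the fragility $u$ of $f(\bw)$ back through the block structure of $f(\bw)$. The structural fact I would record first is that, since $\alpha = \first(f)$ is the first letter of every image $f(\delta)$ (Property~\ref{bLSP properties}(1)) and occurs at most once in each such image (Property~\ref{bLSP properties}(\ref{property1})), the letter $\alpha$ occurs \emph{exactly once} in each $f(\delta)$, at its first position. Hence, writing $\bw = x_0 x_1 \cdots$, the occurrences of $\alpha$ in $f(\bw) = f(x_0)f(x_1)\cdots$ mark precisely the left boundaries of the images $f(x_i)$. The case $u = \varepsilon$ is then immediate: $a$ is the first letter of $f(\bw)$, which equals $\alpha$ by Property~\ref{bLSP properties}(1); and since $a,b,c$ are pairwise distinct we have $b,c \neq \alpha$, so the length-two factors $\beta b$ and $\gamma c$ cannot straddle a block boundary (a straddling factor has the form $\delta\alpha$ with $\delta = \last(f(\delta))$ by Property~\ref{bLSP properties}(2)); they therefore lie inside single images, i.e.\ $\beta b, \gamma c \in {\rm Fact}(f(\alp(\bw)))$.

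For $u \neq \varepsilon$ I would first desubstitute the prefix. As $u$ begins with $\alpha$, I let the last occurrence of $\alpha$ in $u$ split it as $u = f(v)\alpha x$ with $|x|_\alpha = 0$, where $v = x_0\cdots x_{k-1}$ collects the complete images preceding this final $\alpha$; setting $a' = x_k$, the word $\alpha x$ is a nonempty prefix of $f(a')$. Then $va'$ is a prefix of $\bw$, $f(v)$ is a proper prefix of $u$, and comparing $ua = f(v)\alpha x a$ with $f(va')\alpha = f(v)f(a')\alpha$ shows $ua$ is a prefix of $f(va')\alpha$ (whether $a$ lies strictly inside $f(a')$, or $\alpha x = f(a')$ and $a = \alpha$). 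Since each image is nonempty, $|v| \le |f(v)| < |u|$.

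Next I would transfer the two factor conditions. Taking the occurrence of $\beta u b$ in $f(\bw)$, the fact that $u$ begins with $\alpha$ forces this occurrence of $u$ to start at a block boundary, so $\beta$ is the last letter of the preceding image and hence a letter of $\bw$ (Property~\ref{bLSP properties}(2)). Reading the images from that boundary, the first $k$ complete ones spell exactly $f(v)$ (the $\alpha$-positions align), so by injectivity of $f$ (Property~\ref{bLSP properties}(5)) they decode back to $v$; letting $b'$ be the letter of $\bw$ carrying the next image, $\alpha x$ is a prefix of $f(b')$, the word $\beta v b'$ is a factor of $\bw$, and $\beta u b$ is a prefix of $\beta f(vb')\alpha$. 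Applying the same argument to $\gamma u c$ yields $c'$ with $\gamma v c'$ a factor of $\bw$ and $\gamma u c$ a prefix of $\gamma f(vc')\alpha$; crucially the complete-image part again decodes to the \emph{same} $v$, because $u$ is a fixed word with a fixed $\alpha$-structure.

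It then remains to check that $v$ is genuinely an $(a',b',c',\beta,\gamma)$-fragility, and the one real obstacle is the pairwise distinctness of $a',b',c'$. Here $\alpha x$ is a common prefix of $f(a'),f(b'),f(c')$, and the letters immediately following it (reading $\alpha$, the first letter of the next image, when $\alpha x$ exactly fills an image) are $a,b,c$ respectively. Since $a,b,c$ are pairwise distinct, at most one equals $\alpha$, hence at most one of $f(a'),f(b'),f(c')$ equals $\alpha x$; a short case analysis then shows any two of these images differ, either at position $|\alpha x|$ or in length, so $a',b',c'$ are pairwise distinct. With $\beta \neq \gamma$ inherited from the fragility of $f(\bw)$, all the requirements of Definition~\ref{def_fragility} are met, which completes the argument.
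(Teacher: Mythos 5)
Your proof is correct and follows essentially the same route as the paper: both arguments rest on the observation that $\alpha=\first(f)$ marks exactly the block boundaries in $f(\bw)$, handle $u=\varepsilon$ by noting $b,c\neq\alpha$ so $\beta b,\gamma c$ cannot straddle a boundary, and for $u\neq\varepsilon$ split $u=f(v)\alpha x$ at the last occurrence of $\alpha$ and desubstitute the prefix and the two factor occurrences to obtain $a',b',c'$, whose pairwise distinctness follows from that of $a,b,c$. The only cosmetic difference is that the paper completes $\alpha x$ to full images via explicit words $w_1,w_2,w_3$ with $f(a')=\alpha x w_1$, etc., where you instead decode block by block and invoke injectivity of $f$; these are interchangeable phrasings of the same argument.
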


\begin{proof}
\textit{(New fragilities)}
If $u = \varepsilon$, it follows from the definition of an $(a, b, c, \beta, \gamma)$-fragility that $a = \first(\bw)$ and $\beta b$, $\gamma c$ are factors of $f(\bw)$.
Now observe that, still by the same definition, $a \not\in \{b, c\}$.
Thus by definition of bLSP morphisms, $a = \first(f)$ and $\beta b$, $\gamma c$ belong to ${\rm Fact}(f(\alp(\bw)))$.

\medskip

\textit{(Propagated fragilities)}
We assume here that $u$ is not empty.
Let $\alpha = \first(f)$.
Considering the last occurrence of $\alpha$ in $u$, observe that the word $u$ can be decomposed in a unique way as $u = f(v) \alpha x$ with $v$, $x$ words such that $|x|_\alpha = 0$.
As $u$ is an $(a, b, c, \beta, \gamma)$-fragility of $f(\bw)$, there exist words $w_1$, $w_2$ and $w_3$ such that:
\begin{itemize}
\itemsep0cm
\item $|w_1|_\alpha = |w_2|_\alpha = |w_3|_\alpha = 0$;
\item $f(v) \alpha x w_1 \alpha$ is a prefix of $f(\bw)$ and $a = \first(w_1\alpha)$;
\item $\beta f(v)\alpha x w_2 \alpha$ and $\gamma f(v) \alpha x w_3 \alpha$ are factors of $f(\bw)$ with $b = \first(w_2\alpha)$ and $c = \first(w_3 \alpha)$.
\end{itemize}
By definition of a bLSP morphism, there exist letters $a'$, $b'$, $c'$ such that $f(a') = \alpha x w_1$, $f(b') = \alpha x w_2$, $f(c') = \alpha x w_3$. 
These letters $a'$, $b'$, $c'$ are pairwise distinct since letters $a = \first(w_1\alpha)$, $b = \first(w_2\alpha)$ and $c = \first(w_3 \alpha)$ are pairwise distinct.
Moreover $va'$ is a prefix of $\bw$ and words $\beta v b'$ and $\gamma v c'$ are factors of $\bw$ 
(remember that $\alpha$ marks the beginning of letters in $f(\bw)$ as 
$f$ is a bLSP morphism). 
Hence the word $v$ is an $(a', b', c', \beta, \gamma)$-fragility of $\bw$.
Finally let us observe that $|v| \leq |f(v)| < |u|$.
\qed\end{proof}

\section{\label{sec:automaton}An Automaton to Follow Fragilities}

In this section, we introduce an automaton that allows to recognize all directive words of LSP words viewed as $\SbLSP$-adic words. We will prove the converse in next section. 
Observe that transitions of the automaton are defined in order to follow fragilities using Lemma~\ref{origine fragilities}.

\begin{definition}\rm
We let ${\cal A}_{bLSP}$ denote the non-deterministic automaton whose elements are described below.
\begin{itemize}
\item The alphabet of ${\cal A}_{bLSP}$ is the set bLSP of basic LSP morphisms.
\item The set of states $Q$ is the set $2^A \times$bLSP$\times 2^{A^5}$.
Hence a state is the data of a sub-alphabet of $A$, of a bLSP morphism and of a set of 5-tuples 
$(a, b, c, \beta, \gamma)$ of letters whose aim is to represent the set of fragilities of a word. 
For a state $q$, we let $\alp(q)$ denote the sub-alphabet of $A$, by $\bLSP(q)$ the morphism and by $\set(q)$ the set of $5$-tuples.
\item The set of transitions $\Delta$ is the set of triples $(q, f, q')$ such that
\begin{enumerate}
\item $f = \bLSP( q )$;
\item $\alp(q) = \alp(f(\alp(q')))$;
\item if $(a, b, c, \beta, \gamma) \in \set(q')$ then $f$ is not LSP $(a,b,c)$-breaking;
\item $\set(q)$ is the set of all 5-tuples $(a,b,c, \beta, \gamma)$ 
such that $a$, $b$, $c$, $\beta$, $\gamma$ are letters of $\alp(q)$, 
$a \neq b \neq c \neq a$, $\beta \neq \gamma$ and one of the  following two conditions holds:
\begin{enumerate}
\item $a = \first(f)$, $\beta b$, $\gamma c$ in ${\rm Fact}(f(alph(q')))$ and $\beta \neq \gamma$;
\item there exist $a', b', c'$ such that $(a', b', c', \beta, \gamma) \in \set(q')$ and 
a word $x$ such that  
$xa \in \pref(f(a')\alpha)$, 
$xb \in \pref(f(b')\alpha)$ and
$xc \in \pref(f(c')\alpha)$
with $\alpha =\first(f)$.
\end{enumerate}
\end{enumerate}
\item All states are initial.
\end{itemize}
\end{definition}

Figure~\ref{Graph2} shows this automaton when the alphabet is $\{a, b\}$.
In this figure, $\tau_a = [a, ab]$ and $\tau_b = [ba, b]$. 
States $q$ with $set(q) \neq \emptyset$ are not drawn since binary infinite LSP words contain no fragilities.
Moreover states
$(\emptyset, \tau_a, \emptyset)$ and $(\emptyset, \tau_b, \emptyset)$, $(\{a\}, \tau_b, \emptyset)$ and $(\{b\}, \tau_a, \emptyset)$ are not drawn
as there are no transition leaving them. 

\begin{figure}[!h]
\begin{center}
\begin{tikzpicture}[scale=0.3]
\begin{scope}
\node[draw, rectangle, rounded corners = 3pt] (a) at (0,0) {$\{a\}, \tau_a, \emptyset$};
\node[draw, rectangle, rounded corners = 3pt] (b) at (10,0) {$\{a,b\}, \tau_b, \emptyset$};
\node[draw, rectangle, rounded corners = 3pt] (c) at (20,0) {$\{a,b\}, \tau_a, \emptyset$};
\node[draw, rectangle, rounded corners = 3pt] (d) at (30,0) {$\{b\}, \tau_b, \emptyset$};
\draw [<-] (a) -- (b) node[midway,above]{$\tau_b$};
\draw [<-] (b) to[bend right] node[below]{$\tau_a$} (c) ;
\draw [<-] (c) to[bend right] node[above]{$\tau_b$}  (b) ;
\draw [->] (c) -- (d) node[midway,above]{$\tau_a$};
\path  (a) edge [loop above] node[above]{$\tau_a$} (a) ;
\path  (c) edge [loop above] node[above]{$\tau_a$} (c) ;
\path  (d) edge [loop above] node[above]{$\tau_b$} (d) ;
\path  (b) edge [loop above] node[above]{$\tau_b$} (b) ;
\end{scope}
\end{tikzpicture}
\end{center}
\caption{\label{Graph2}${\cal A}_{\bLSP}$ for the binary alphabet}
\end{figure}

For alphabets with at least three letters, automaton ${\cal A}_{bLSP}$ is too huge to be drawn 
even restricting to states $q$ such that $set(q)$ is a set of fragilities of an LSP word. 

An infinite word $\mbf$ over $\bLSP$ is said to be \textit{recognized} by ${\cal A}_{bLSP}$ 
if there exists an infinite path in ${\cal A}_{bLSP}$ whose label is $\mbf$. The aim of 
${\cal A}_{bLSP}$ is to recognize $\bLSP$ directive words of infinite LSP words.

Let $\bw$ be an LSP word.
We associate with $\bw$ a state of ${\cal A}_{bLSP}$ that we let denote $q(\bw)$. 
This state is the state $q$ such that:
\begin{itemize}
\item $\alp(q) = \alp(\bw)$:
\item $\bLSP(q)$ is any morphism $f$ such that $\bw = f(\bw')$ for some word $\bw'$ (such a morphism exists by Lemma~\ref{lemma2}).
\item $\set(q)$ is the set of all $5$-tuples $(a, b, c, \beta, \gamma)$ such that $\bw$ is $(a, b, c, \beta, \gamma)$-fragile.
\end{itemize}

The fact that, for any LSP word $\bw$, any of its directive word is recognized by $\bw$, is a direct consequence of next lemma.

\begin{lemma}
\label{LSP words are recognizable}
Let $\bw$, $\bw'$ be LSP words such that $\bw = f(\bw')$ with $f = \bLSP(q(\bw))$.
The transition $(q(\bw), \bLSP(q(\bw)), q(\bw'))$ exists in ${\cal A}_{\bLSP}$.
\end{lemma}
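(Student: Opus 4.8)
The plan is to verify directly the four defining conditions of a transition for the triple $(q(\bw), f, q(\bw'))$, where I write $q = q(\bw)$, $q' = q(\bw')$ and $f = \bLSP(q(\bw))$. Condition~1 ($f = \bLSP(q)$) holds by the very choice of $f$. For Condition~2, I would note that $\alp(q) = \alp(\bw)$ and $\alp(q') = \alp(\bw')$ by definition of the states $q(\bw)$ and $q(\bw')$; since $\bw = f(\bw')$ and $f$ is a morphism, the letters occurring in $\bw$ are exactly those occurring in the images of letters occurring in $\bw'$, i.e.\ $\alp(\bw) = \alp(f(\alp(\bw')))$, which is Condition~2.

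For Condition~3, suppose $(a,b,c,\beta,\gamma) \in \set(q')$, i.e.\ $\bw'$ is $(a,b,c,\beta,\gamma)$-fragile, hence $(a,b,c)$-fragile. Since $\bw'$ is LSP and $\bw = f(\bw')$ is LSP, the morphism $f$ cannot be LSP $(a,b,c)$-breaking: otherwise the definition of a breaking morphism applied to the $(a,b,c)$-fragile LSP word $\bw'$ would force $f(\bw') = \bw$ to be non-LSP, a contradiction (equivalently, use the contrapositive of $1 \Leftrightarrow 3$ in Lemma~\ref{when (a, b, c)-breaking morphisms appear}).

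Condition~4 is the heart of the proof and amounts to showing that the set of $5$-tuples $(a,b,c,\beta,\gamma)$ for which $\bw$ is fragile equals the set described by the alternatives (a) and (b); I would prove the two inclusions separately, both times reading off Lemma~\ref{origine fragilities}. For the inclusion from fragilities of $\bw$ into (a)$\,\cup\,$(b): given that $\bw = f(\bw')$ is $(a,b,c,\beta,\gamma)$-fragile, pick a witnessing fragility word $u$ and apply Lemma~\ref{origine fragilities}. If $u = \varepsilon$ the ``new fragilities'' case yields exactly alternative~(a) (the distinctness requirements $a\neq b\neq c\neq a$ and $\beta\neq\gamma$ come for free from the definition of fragility). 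If $u \neq \varepsilon$ the ``propagated fragilities'' case supplies letters $a',b',c'$ and an $(a',b',c',\beta,\gamma)$-fragility $v$ of $\bw'$, so $(a',b',c',\beta,\gamma)\in\set(q')$, together with the fact that $f(v)$ is a proper prefix of $u$; writing $u = f(v)x$ and cancelling the common prefix $f(v)$ (respectively $\beta f(v)$, $\gamma f(v)$) in the containments $ua \in \pref(f(va')\alpha)$, $\beta ub \in \pref(\beta f(vb')\alpha)$, $\gamma uc \in \pref(\gamma f(vc')\alpha)$ gives $xa\in\pref(f(a')\alpha)$, $xb\in\pref(f(b')\alpha)$, $xc\in\pref(f(c')\alpha)$, which is alternative~(b). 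For the reverse inclusion I would reconstruct a fragility of $\bw$ from the data of (a) or (b): in case~(a), $u = \varepsilon$ works because $\first(\bw) = \first(f) = a$ and each of $\beta b$, $\gamma c$, being a factor of some $f(d)$ with $d\in\alp(\bw')$, is a factor of $\bw = f(\bw')$; in case~(b), taking $u = f(v)x$ where $v$ is the fragility of $\bw'$, I would use that $f(v)f(a')\alpha$ is a prefix of $\bw$ and that $\beta f(v)f(b')\alpha$ and $\gamma f(v)f(c')\alpha$ are factors of $\bw$ (here Property~\ref{bLSP properties}(2), $\last(f(\beta))=\beta$, turns $f(\beta vb')$ into the factor $\beta f(v)f(b')$, and infiniteness of $\bw$ lets me append the marker $\alpha$), whence $ua$, $\beta ub$, $\gamma uc$ are respectively a prefix and factors of $\bw$.

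The main obstacle I expect is the bookkeeping inside Condition~4, and in particular making the decomposition $u = f(v)x$ line up so that the single word $x$ simultaneously witnesses alternative~(b) for all three of $a$, $b$, $c$, and dually, in the reverse inclusion, handling the boundary case where $xb$ exhausts $f(b')\alpha$ (so that $b = \alpha$): this is exactly where I must invoke the infiniteness of $\bw$ to guarantee that an occurrence of $f(b')$ in $\bw$ is followed by $\first(f) = \alpha$. Everything else is a direct translation between the combinatorial statement of Lemma~\ref{origine fragilities} and the syntactic conditions defining the transition set $\Delta$.
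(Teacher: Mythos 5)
Your proof is correct and follows essentially the same route as the paper: verify the four transition conditions, with Condition~3 handled by the contrapositive of the definition of an LSP $(a,b,c)$-breaking morphism, and Condition~4 read off from Lemma~\ref{origine fragilities} via the decomposition $u = f(v)x$. You are in fact more thorough than the paper on one point: the paper states the set equality required by Condition~4 as an ``if and only if'' and only details the inclusion from fragilities of $\bw$ into alternatives (a)/(b), whereas you also spell out the reverse reconstruction of a fragility of $\bw$ from data of type (a) or (b) --- using $\last(f(\beta))=\beta$ and the infiniteness of $\bw$ to append the marker $\alpha$ --- a direction the paper leaves implicit.
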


\begin{proof}
Let $f = \bLSP(q(\bw))$.
Observe that $\alp(q(\bw)) = \alp(\bw)$, $\bw = f(\bw')$ and $\alp(q(\bw')) = \alp(\bw')$. Whence we have $\alp(q(\bw)) = \alp(f(\alp(q(\bw')))$.

By Lemma~\ref{origine fragilities} and the definition of $q(\bw)$, a 5-tuple $(a, b, c, \beta, \gamma)$ belongs to set $q(\bw)$ if and only if one of the following two conditions holds:
\begin{itemize}
\item $a = \first(f)$, $\beta b$, $\gamma c$ belong to ${\rm Fact}(f(\alp(\bw')))$
\item there exist $a'$, $b'$, $c'$, $\beta$, $\gamma$ in $\alp(\bw')$ and an $(a, b, c, \beta, \gamma)$-fragility $u$ of $\bw$ and an $(a', b', c', \beta, \gamma)$-fragility $v$ of $\bw'$ such that $|v| < |u|$, $f(v)$ is a proper prefix of $u$ and words $ua$, $\beta ub$, $\gamma u c$ are respectively prefixes of $f(va')\alpha$, $\beta f(vb')\alpha)$, $\gamma f(vc')\alpha$ with $\alpha = \first(f)$.
\end{itemize}
For the second case, $u = f(v) x$ for a word $x$. The word $xa$ is a prefix of $f(a')\alpha$,
$xb$ is a prefix of $f(b')\alpha$ and $xc$ is a prefix of $f(c')\alpha$.
As $\bw'$ is  $(a', b', c', \beta, \gamma)$-fragile,  $(a', b', c', \beta, \gamma) \in \set(q(\bw'))$.
Thus in both cases, Condition~4 for $(q(\bw),$ $\bLSP(q(\bw),$ $q(\bw'))$ to be a transition of ${\cal A}_{\bLSP}$ is verified.

To end the proof we have to check Property 3 of transitions of ${\cal A}_{\bLSP}$.
Assume there exists an $(a, b, c, \beta, \gamma)$-fragility in $\set(q(\bw'))$.
By definition of $q(\bw')$, 
this implies that $\bw'$ has an $(a, b, c, \beta, \gamma)$-fragility.
As $\bw = f(\bw')$ is LSP, $f = \bLSP(q(\bw))$ is not LSP $(a, b, c)$-breaking.
\qed\end{proof}

\section{\label{sec:carac}A Characterization of LSP Words}

\begin{theorem}
\label{th_carac}
A word $\bw$ is LSP if and only if it is $S_\bLSP$-adic and all of its directive word are recognized by the automaton ${\cal A}_\bLSP$.
\end{theorem}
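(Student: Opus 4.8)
The statement is an equivalence, and I would treat the two implications separately; the forward one is essentially an assembly of the earlier lemmas, while the converse carries all of the difficulty. For the forward implication, $\SbLSP$-adicity is exactly Proposition~\ref{characLSPwords}, so it remains to see that \emph{every} directive word is recognized. Fix an LSP word $\bw$ and any directive word $(f_n)_{n\ge 1}$ with $\bw_1=\bw$ and $\bw_n=f_n(\bw_{n+1})$. Iterating Lemma~\ref{lemma1} upward from $\bw_1$ shows that every $\bw_n$ is LSP. For each $n$ I would form the state $q(\bw_n)$, taking its morphism component to be exactly $f_n$ (a legitimate choice, since $\bw_n=f_n(\bw_{n+1})$). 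Lemma~\ref{LSP words are recognizable} then supplies each transition $(q(\bw_n),f_n,q(\bw_{n+1}))$, and since all states are initial, $(q(\bw_n))_{n\ge1}$ is an infinite path labelled $(f_n)$; hence $(f_n)$ is recognized.

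For the converse I would fix a directive word $(f_n)$ of $\bw$ together with a recognizing path $(q_n)$. The key preliminary is a \emph{correspondence lemma}: along the path one has $\alp(q_n)=\alp(\bw_n)$ and $\set(q_n)$ equal to the set of fragilities of $\bw_n$. The natural obstacle is that the tower $\bw_1,\bw_2,\dots$ has no base, so one cannot induct on the level $n$. Instead I would induct on the length of the fragility word, using Lemma~\ref{origine fragilities}: a fragility of $\bw_n$ whose word is empty is placed into $\set(q_n)$ by Condition~4(a), whereas a non-empty fragility of $\bw_n$ descends, by Lemma~\ref{origine fragilities}, to a \emph{strictly shorter} fragility of $\bw_{n+1}$, which by induction already lies in $\set(q_{n+1})$ and is therefore re-created in $\set(q_n)$ by Condition~4(b); the reverse inclusion is the direct reading of the same lemma. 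This finite descent is well-founded precisely because propagated fragilities strictly shrink, so no base level is required.

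With the correspondence available, I would show that every $\bw_n$ is LSP, again sidestepping a downward induction on the tower by inducting on the length $k$ of a hypothetical \emph{shortest} left-special non-prefix factor, uniformly over all $n$. The case $k=1$ cannot occur, since a left-special letter of $\bw_n=f_n(\bw_{n+1})$ must be $\first(f_n)=\first(\bw_n)$ and hence a prefix. For the step I reuse the extraction from the proof of Lemma~\ref{when (a, b, c)-breaking morphisms appear} (which does not itself require $\bw_{n+1}$ to be LSP): a shortest left-special non-prefix factor of $\bw_n$ of length $k+1$ yields letters $a,b,c,\beta,\gamma$ and a word $u$ in $\bw_{n+1}$ with $ua$ a prefix, $\beta ub$ and $\gamma uc$ factors, $a\ne b$, $a\ne c$, $\beta\ne\gamma$, while $f_n$ realizes the longest-common-prefix inequality of Corollary~\ref{carac LSP breaking morphisms}. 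If $b\ne c$, then $\bw_{n+1}$ is $(a,b,c,\beta,\gamma)$-fragile, so $(a,b,c,\beta,\gamma)\in\set(q_{n+1})$ by the correspondence, contradicting Condition~3 (which forbids $f_n$ from being $(a,b,c)$-breaking). If $b=c$, then $ub$ is a left-special non-prefix factor of $\bw_{n+1}$ of length at most $k$, contradicting the induction hypothesis. Thus no such factor exists; taking $n=1$ shows $\bw$ is LSP.

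The main obstacle throughout the converse is the absence of a base level in the $S$-adic tower: both the fragility-set correspondence and the LSP property are tempting targets for a downward induction on $n$, which is simply unavailable. The technical core is therefore to recast both as inductions on finite length parameters that hold simultaneously at every level — fragility length for the correspondence, factor length for the LSP property — with Lemma~\ref{origine fragilities} supplying the strict decrease that makes the first descent well-founded and the degenerate case $b=c$ absorbed into the outer induction rather than settled by a separate termination argument. The one further point that demands care is the alphabet matching $\alp(q_n)=\alp(\bw_n)$ underpinning the correspondence, which must be secured (from Condition~2 together with $\last(f_n(e))=e$, so that every letter of $\bw_{n+1}$ reappears in $\bw_n$) before Condition~4(a) can be invoked.
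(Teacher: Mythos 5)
Your proposal has the same mathematical skeleton as the paper's proof, repackaged as two explicit inductions instead of one minimal-counterexample argument. The forward direction is identical to the paper's. For the converse, the paper chooses, among all recognized directive words, all words they direct, and all left-special non-prefix factors, a triple whose factor is shortest; since a shifted recognized path is again a recognized path (all states are initial), this global minimality is exactly your ``induction on factor length, uniformly over all levels''. The paper's Steps 1--2 coincide with your extraction from the proof of Lemma~\ref{when (a, b, c)-breaking morphisms appear} plus Corollary~\ref{carac LSP breaking morphisms}, the degenerate case $b=c$ being disposed of by minimality there and by the induction hypothesis in your version (same move); its Steps 3--4 (tracing a fragility back to an empty one via Lemma~\ref{origine fragilities}, then pushing membership in the $\set$-components back down through Conditions~4(a)/(b)) are precisely your induction on fragility length. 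So the content is the same; only the organization differs.

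The one point where you genuinely deviate is your ``correspondence lemma'', and there you overclaim: as stated it is false. The automaton's conditions constrain an infinite path with no base state, so a recognizing path can carry spurious data that is never grounded in Condition~4(a). Concretely, for $f=[a,ab,ac]$ the state $q=(\{a,b,c\},f,\{(a,b,c,b,c)\})$ has a self-loop labelled $f$: Condition~4(b) with $x=a$ regenerates the tuple from itself, and $f$ is not $(a,b,c)$-breaking, so Condition~3 holds. This path recognizes $f^\omega$, whose only directed word is $a^\omega$ --- a word with no fragility and alphabet $\{a\}$. Hence the inclusion $\set(q_n)\subseteq\{\mbox{fragilities of }\bw_n\}$ is not ``the direct reading of the same lemma''; it is simply untrue, and likewise $\alp(q_n)=\alp(\bw_n)$ cannot be ``secured'' from Condition~2 and $\last(f_n(e))=e$, because Condition~2 ties states only to states and the last-letter property ties words only to words. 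Your argument survives because you only ever use the inclusion you can prove (every fragility of $\bw_m$ lies in $\set(q_m)$); but note that even this inclusion needs $\alp(\bw_m)\subseteq\alp(q_m)$, since Condition~4 demands the five letters lie in $\alp(q_m)$ and the length-two factors be found in ${\rm Fact}(f_m(\alp(q_{m+1})))$, and that inclusion is also not a consequence of recognition as defined (the path whose states are all $(\emptyset,f_m,\emptyset)$ recognizes every directive word). To be fair, the paper's own Step~4 silently identifies $\alp(q_m)$ with $\alp(\bw_m)$ in exactly the same way, so this gap is inherited from the setup rather than introduced by you; but it must be assumed (or built into the definition of recognition), not asserted as provable. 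The clean repair of your write-up is to replace the claimed equality by the single inclusion you use, proved by your induction on fragility length under an explicit hypothesis tying the path's alphabet components to the words $\bw_m$.
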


\begin{proof}
Proposition~\ref{characLSPwords} and Lemma~\ref{LSP words are recognizable} prove the only if part of Theorem~\ref{th_carac}. Let us prove the if part of Theorem~\ref{th_carac}.

Assume, by contradiction, that ${\cal A}_\bLSP$ recognizes a directive word $\mbf$ of a word $\bw$
which is $S_\bLSP$-adic but not LSP.  
Such a word contains a left special factor $u$ that is not a prefix of $\bw$.
Among all possible triples $(\mbf, \bw, u)$, choose one such that $|u|$ is minimal.

For $n \geq 1$, we let $f_n$ denote the $n^{th}$ letter of $\mbf$ and $\bw_n$ the word directed by $(f_k)_{k\geq n}$ ($\bw_1 = \bw$; $\bw_2$ is directed by 
$f_2 f_3 \cdots$; $\bw_n = f_n( \bw_{n+1})$ for $n \geq 1$).

\medskip

\textit{Step 1: $\bw_2$ contains a fragility} 

First observe $|u| \geq 2$. 
Indeed we have $|u|\neq 0$ as the empty word is a prefix of $\bw$.
Moreover, by the structure of images of the $\bLSP$ morphism $f_1$,
only the letter $\first(f_1)$ can be left special, whence $|u| \neq 1$.

Let $\alpha = \first(\bw) = \first(f_1)$.
Considering the last occurrence of $\alpha$ in $u$,
the word $u$ can be decomposed in a unique way $u = f_1(v) \alpha x$ with $v$, $x$ words such that $|x|_\alpha = 0$.

As $u$ is left special, there exist distinct letters $\beta$ and $\gamma$ 
such that $\beta u$ and $\gamma u$ are factors of $\bw$.
As the letter $\alpha$ marks the beginning of images of letters in $\bw$
and as for all letters $\delta$, $f_1(\delta)$ ends with $\delta$,
we deduce that 
$\beta v $ and $\gamma v$ are factors of $\bw_2$.
As $|v| < |u|$ and by choice of the triple $(\mbf, \bw, u)$,  the word $v$ is a prefix of $\bw_2$.
Consequently $f_1(v)\alpha$ is a prefix of $\bw$ and so $x \neq \varepsilon$.

Assume there exists a unique letter $b$ such that 
$\beta vb$ is a factor of $\bw$ and $u$ is a prefix of $f(vb)$.
Assume also that $b$ is the unique letter $c$ such that $\gamma vc$ 
is a factor of $\bw$ and $u$ is a prefix of $f(vc)$.
As $u$ is not a prefix of $\bw = f_1(\bw_2)$ and as $u$ is a prefix of $f_1(vb)$,
the word $vb$ is not a prefix of $\bw_2$.
By choice of the triple $(\mbf, \bw, u)$, $|vb| \geq |u|$.
As $|v|<|u|$, we get $|vb| = |u| = |f_1(v)\alpha x|$.
As $|f_1(v)| \geq |v|$, it follows $x = \varepsilon$: a contradiction.

From what precedes, we deduce the existence of two distinct letters
$b$ and $c$ such that $\beta v b$ and $\gamma v c$ are factors of $\bw_2$ 
with $u$ a prefix of $f_1(vb)$ and $f_1(vc)$. As $u$ is not a prefix of $\bw = f_1(\bw_2)$, 
the letter $a$ that follows the prefix $v$ of $\bw_2$ is different from $b$ and $c$.
Hence the word $\bw_2$ is $(a, b, c, \beta, \gamma)$-fragile and $v$ is such a fragility.

\medskip

\textit{Step 2: $f_1$ is LSP $(a, b, c)$-breaking}

By definition of letters $b$ and $c$ at Step~1, the word $\alpha x$ is a common prefix of $f_1(b)$ and $f_1(c)$. Also as $u = f_1(v)\alpha x$ is not a prefix of $\bw$ while $f_1(v)a$ is a prefix of $\bw$,
the word $\alpha x$ is not a prefix of $f_1(a)$.
By Corollary~\ref{carac LSP breaking morphisms},
$f_1$ is $(a, b, c)$-breaking.

\medskip

\textit{Step 3: origin of fragilities of $\bw_2$}

Applying iteratively Lemma~\ref{origine fragilities},
we deduce the existence of an integer $n \geq 2$, a sequence of triples of pairwise distinct letters
$(a_i, b_i, c_i)_{i \in \{2, \cdots, n\}}$, a sequence of $(a_i, b_i, c_i, \beta, \gamma)$-fragilities $(v_i)_{i \in \{2, \cdots, n\}}$ such that:
\begin{itemize}
\item $v_i$ occurs in  $\bw_i$ for all $i \in \{2, \cdots, n\}$;
\item $(a_2, b_2, c_2) = (a, b, c)$ and $v_2 = v$;
\item $|v_{i+1}| < |v_i|$ for all $i \in \{2, \cdots, n-1\}$;
\item $v_n = \varepsilon$.
\item words $v_i a_i$, $\beta v_i b_i$, $\gamma v_i c_i$ are respectively prefixes of the words
$f_i( v_{i+1} a_{i+1})\alpha_i$,
$\beta f_i( v_{i+1} b_{i+1})\alpha_i$,
$\gamma f_i( v_{i+1} c_{i+1})\alpha_i$
where $\alpha_i = \first(f_i)$ for $i \in \{2, \cdots, n-1\}$;
\item $a_n = \first(f_n)$;
\item $\beta b_n$, $\gamma c_n$ belong to ${\rm Fact}(f_n(\alp(\bw_{n+1})))$.
\end{itemize}

\medskip

\textit{Step 4: Conclusion}
Let $(q_i)_{i \geq 1}$ be the sequence of states along a path recognizing $\mbf$: for all $n \geq 1$, $(q_n, f_n, q_{n+1})$ is a transition of ${\cal A}_\bLSP$.

At the end of Step 3, we learn that there exists an $(a_n, b_n, c_n)$-fragility in $f_n(\bw_{n+1})$. 
Hence
$a_n$, $b_n$, $c_n$ are pairwise distinct letters. 
Especially as $a = \first(f_n) \not\in \{b_n, c_n\}$ by properties of bLSP morphisms,
the words $\beta b_n$ and $\gamma c_n$ are factors of images of letters, say $b_n'$ and $c_n'$.
As $\alp(q_n) = \alp(f_n(q_{n+1}))$, 
this implies that $b_n'$ and $c_n'$ belong to $\alp(q_{n+1})$ and $a_n$, $b_n$ and $c_n$ belong to $\alp(q_n)$. Moreover, as $\beta b_n$, $\gamma c_n$ are factors of words in $f_n(\alp(q_{n+1}))$, we deduce that $(a_n, b_n, c_n, \beta, \gamma) \in \set(q_n)$.

By backward induction, we can show that for all $i$, $2 \leq i \leq n$, 
$(a_i, b_i, c_i, \beta, \gamma) \in \set(q_n)$.
Especially $(a_2, b_2, c_2, \beta, \gamma) \in \set(q_2)$.
As, by Step~2, $f_1$ is LSP $(a_2, b_2, c_2)$-breaking and $(q_1, f_1, q_2)$ is a transition of ${\cal A}_\bLSP$, we get our final contradiction.
\qed\end{proof}

\section{\label{sec:conclusion}Conclusion}

Recall that G.~Fici~\cite{Fici2011TCS} asked for a characterization of both finite and infinite words.
Observe that it can be proved that any non-empty finite LSP word $w$ is right extendable to a longer LSP word (That is there exists a letter $a$ occurring in $w$ such that $wa$ is a LSP).
As a consequence one can prove:

\begin{lemma}
\label{L:finite}
A finite word is LSP if and only if it is a prefix of an infinite LSP word.
\end{lemma}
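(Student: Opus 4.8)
The plan is to prove the two implications of Lemma~\ref{L:finite} separately, with the nontrivial direction relying on the right-extendability fact announced just before the statement.

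For the easy direction, suppose $w$ is a prefix of an infinite LSP word $\bw$. I would argue that any left special factor $u$ of $w$ is also a left special factor of $\bw$: if $au$ and $bu$ occur in $w$ with $a \neq b$, then they occur in $\bw$ as well, since every factor of $w$ is a factor of $\bw$. Because $\bw$ is LSP, $u$ must be a prefix of $\bw$, hence (being of length at most $|w|$) a prefix of $w$. Thus every left special factor of $w$ is a prefix of $w$, so $w$ is LSP. A small point to check here is that the occurrences $au$ and $bu$ genuinely carry over — since $w$ is a prefix of $\bw$, any factor occurrence in $w$ is literally an occurrence in $\bw$, so this is immediate.

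For the converse — a finite LSP word is a prefix of an infinite LSP word — I would build the infinite word as a limit of finite LSP words by repeatedly invoking the right-extendability result stated just above the lemma. Starting from the given finite LSP word $w_0 = w$, the fact that every non-empty finite LSP word is right extendable to a longer LSP word produces a letter $a_0$ such that $w_1 = w_0 a_0$ is LSP; iterating gives an increasing (for the prefix order) sequence $w_0 \prec w_1 \prec w_2 \prec \cdots$ of finite LSP words with $|w_{n+1}| = |w_n| + 1$. Their limit $\bw = \lim_{n \to \infty} w_n$ is a well-defined infinite word having $w$ as a prefix. I would then verify that $\bw$ is LSP: any left special factor $u$ of $\bw$ arises from two occurrences $au$ and $bu$ inside $\bw$, and since $u$, $au$, $bu$ are all finite they already appear in some prefix $w_n$; hence $u$ is a left special factor of the finite LSP word $w_n$, so $u$ is a prefix of $w_n$ and therefore of $\bw$. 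This shows $\bw$ is LSP and completes the converse.

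The main obstacle is the right-extendability claim itself, which the excerpt asserts can be proved but does not prove: one must show that a non-empty finite LSP word $w$ admits a letter $a \in \alp(w)$ with $wa$ still LSP. The delicate part is that appending a letter can create a new left special factor — a suffix $u$ of $wa$ that was not left special before but becomes so because $ua$ (with the freshly added $a$) now coincides with a second occurrence context. The argument must choose $a$ (among letters already occurring in $w$, as the statement specifies) so that no such newly-created left special suffix fails the prefix condition; I would expect this to hinge on the structure of the rightmost occurrences of factors and on the bounded first-difference of the complexity function $p_w$ noted in the introduction, exactly the kind of careful case analysis the paper defers. Given that right-extendability is granted as a known fact for the purposes of this lemma, the limit construction and the two transfer arguments above are then routine.
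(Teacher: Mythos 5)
Your proposal is correct and follows essentially the same route as the paper: the easy direction by transferring left special factors from the prefix to the infinite word, and the converse by iterating the right-extendability lemma (the paper's Lemma~\ref{L:extendable}) and passing to the limit, whose LSP property is checked exactly as you do, by locating any left special factor inside a finite LSP prefix. The only differences are minor: the paper's proof additionally observes that the extension of Lemma~\ref{L:extendable} preserves longest borders, so the limit word is explicitly the periodic word $p^\omega$ (where $w = pu$ with $u$ the longest border of $w$), and the extendability itself is proved in the paper by a short longest-border argument (append the letter $a$ such that $ua$ is a prefix of $w$, $u$ the longest border), not by the rightmost-occurrence or complexity-based analysis you anticipated.
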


Lemma~\ref{L:finite} is a consequence of next result.

\begin{lemma}
\label{L:extendable}
Any non-empty finite LSP word $w$ is right extendable to a longer LSP word. That is there exists a letter $a$ occurring in $w$ such that $wa$ is a LSP.
\end{lemma}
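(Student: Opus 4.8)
The plan is to analyze directly how the set of left special factors changes when a single letter is appended, and to choose the appended letter using the border structure of $w$. The starting observation is that appending a letter $a$ to $w$ creates new occurrences of factors only at the last position of $wa$; consequently every left special factor of $wa$ is already a factor of $w$, and a factor that is left special in $wa$ but was not left special in $w$ must be a suffix $sa$ of $wa$ such that $sa$ already occurs in $w$ (a \emph{repeated suffix}), its new end-occurrence supplying a fresh preceding letter. Since the left special factors of $w$ are prefixes, by the LSP property they remain prefixes of $wa$; hence it suffices to choose $a$ so that no repeated suffix $u = sa$ is \emph{bad}, meaning left special in $wa$ but not a prefix of $wa$.

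Next I would fix the candidate letter. Let $s^\ast$ be the longest proper border of $w$ (its longest proper suffix that is also a prefix), put $k = |s^\ast|$, and append $a = w[k+1]$; this is well defined and is a letter occurring in $w$, and it makes $s^\ast a$ a prefix of $w$ that is also a suffix of $wa$, i.e. a border of $wa$. Suppose, for contradiction, that $u = sa$ is bad. Because all new occurrences sit at the end of $wa$, the left-speciality of $u$ must involve its final occurrence, whose preceding letter is $c := w[\,|w|-|s|\,]$; so $sa$ also occurs \emph{inside} $w$, there preceded by some letter $x \neq c$. Then $s$ is preceded in $w$ by both $x$ and $c$, so $s$ is left special in $w$, hence a prefix of $w$; as $s$ is also a suffix of $w$ (being the prefix of $u$ under its end-occurrence), $s$ is a border of $w$, whence $|s| \le k$. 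Since $u$ is not a prefix of $wa$ we must have $|s| < k$ and $w[\,|s|+1\,] \neq a$.

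The contradiction then comes from the border geometry. As $s$ is a border strictly shorter than $s^\ast$, it is a suffix of $s^\ast$, namely $s = w[\,k-|s|+1 \ldots k\,]$; therefore $sa = s\,w[k+1] = w[\,k-|s|+1 \ldots k+1\,]$ occurs inside $w$, preceded by $w[\,k-|s|\,]$. The border identity $w[\,|w|-k+1 \ldots |w|\,] = w[\,1 \ldots k\,]$ forces $w[\,k-|s|\,] = w[\,|w|-|s|\,] = c$, so this internal occurrence of $sa$ is itself preceded by $c$. This contradicts the requirement (extracted above) that the internal occurrences of $sa$ all carry a preceding letter different from $c$, which was exactly what allowed the final occurrence to make $u$ newly left special. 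Hence no bad factor exists and $wa$ is LSP, with the degenerate cases $s = \varepsilon$ and $k = 0$ checked directly (when $k=0$ the appended letter is $\first(w)$, so the only short repeated suffix $a$ is itself a prefix).

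The main obstacle I expect is the careful bookkeeping of the left contexts in the second and third steps: one must argue cleanly that a newly created left-speciality is necessarily witnessed by the unique preceding letter $c$ of the end-occurrence, and then that the border structure forces an internal occurrence sharing that same preceding letter. All the content of the proof is concentrated in verifying that these two occurrences have matching preceders via the border identity; the reductions in the first paragraph and the handling of the empty-border and $s=\varepsilon$ edge cases are routine once this core coincidence is established.
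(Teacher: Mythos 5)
Your proof is correct and follows essentially the same route as the paper: you append the same letter $a$ (the one extending the longest proper border), reduce to left special factors of $wa$ that are suffixes $sa$, show $s$ is left special in $w$ and hence a border, split on whether $s$ is the longest border or a shorter one, and in the latter case locate the occurrence of $sa$ inside the prefix $s^\ast a$ of $w$. The only difference is one of detail: where the paper tersely concludes that this occurrence makes $sa$ ``a left special factor, hence a prefix of $w$,'' you explicitly verify via the border identity $w[k-|s|]=w[|w|-|s|]$ that the internal and final occurrences share the same preceding letter, which is precisely the step the paper leaves implicit.
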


\begin{proof}
The result is immediate for words of length 1.
Let $w$ be a finite LSP word of length at least two.
Let $u$ be its longest border (that is the longest word distinct from $w$ that is both a prefix and a suffix of $w$) and let $a$ be the letter such that $ua$ is a prefix of $w$.
From now on we prove that $wa$ is LSP.
Consider a left special factor of $wa$. If it is a factor of $w$, as $w$ is LSP, it is a prefix of $w$ and so of $wa$.
Assume now that $va$ is both a suffix and a left special factor of $wa$.
As $v$ is then a left special factor of $w$, $v$ is a prefix of $w$ and so it is a border of $w$. 
If $v = u$ then, by definition, $va$ is a prefix of $wa$.
In the remaining case $v \neq u$, $va$ is a suffix of $ua$ which is a prefix of $w$. Hence $va$ occurs in $w$, and as a left special factor, it is a prefix of $w$ and so of $wa$. Hence $wa$ is LSP.
\qed\end{proof}

\noindent
\textit{Proof of Lemma~\ref{L:finite}.}
One can observe that if $u$ is the longest border
of a word $w$ and if $ua$ is a prefix of $w$, then $ua$ is the longest border of $wa$. Hence iterating the proof of the previous lemma, one obtains that:
if $w = pu$ with $u$ the longest border of $w$, then the infinite extension $p^\omega$ of $w$ is LSP. As any prefix of an LSP word is also LSP, Lemma~\ref{L:finite} follows.

By definition, left special factors are prefixes of LSP words. Hence readers can verify that
Lemma~\ref{L:extendable} can not be stated for left extendability. More precisely, if, for a letter $a$ and a word $x$, both $a$ and $x$ are LSP, then $ax$ has no left special factor except the empty word or $ax = a^nu$ for a word $u$ whose left special factors are words $a^i$ with $i \leq n$.\qed

\medskip

Lemma~\ref{L:finite} shows that any characterization of infinite LSP words provides naturally a characterization of finite LSP words (adding ``is a prefix of" before the characterization of infinite LSP words). For instance in the binary case, this allows to find back M.~Sciortino and L.Q.~Zamboni's result \cite{Sciortino_Zamboni2007DLT}: ``binary words having suffix automaton with the minimal possible numbers of states are exactly the finite prefixes of standard Sturmian words" (that can be reformulated after G.~Fici's work : ``finite binary LSP words are exactly the finite prefixes of standard Sturmian words").
For this purpose, one can first see from Theorem~\ref{th_carac} and Figure~\ref{Graph2} that directive words of binary infinite LSP words are ultimately $\tau_a$ or ultimately $\tau_b$ or 
ultimately contain both $\tau_a$ and $\tau_b$. By classical results (see, \textit{e.g.}, \cite{BerstelSeebold2002Lot}) it can be deduced that an infinite LSP word is an infinite repetition of a finite standard word or is an infinite standard word. As any power of a finite standard word is a prefix of an infinite standard word (see \cite[Chap. 2]{Lothaire2002} for instance), we get M.~Sciortino and L.Q.~Zamboni's result.

We end this paper mentioning natural questions arising from this work.
Can a smaller automaton than ${\cal A}_\bLSP$ can be found for recognizing directive words of LSP infinite words?
Can a similar $S$-adicity system can be found for infinite words having at most one left special factor? 
Does there exist a finite or infinite set $S$ of morphisms such that an infinite word is LSP if and only if it $S$-adic (as it occurs for infinite balanced binary words)?

\end{document}